\numberwithin{equation}{section}
\DeclareMathOperator{\tr}{Tr}
\DeclareMathOperator{\rpart}{Re}
\DeclareMathOperator{\ipart}{Im}
\DeclareMathOperator{\erfc}{erfc}
\DeclareMathOperator{\diag}{diag}
\newcommand\ie{\emph{i.e. }}
\newcommand\eg{\emph{e.g. }}
\newcommand\ea{\emph{et al.}}
\newcommand{\abs}[1]{\left \lvert #1 \right \rvert}
\newcommand\UN{\mathrm{U}(N)}
\newcommand\mun{\mu^{\mathrm{G}}}
\newcommand\ed{\mathrm{d}}
\theoremstyle{plain}
\newtheorem{theorem}{Theorem}[section]
\newtheorem{proposition}[theorem]{Proposition}
\newtheorem{lemma}[theorem]{Lemma}
\newtheorem{corollary}[theorem]{Corollary}
\newtheorem{remark}[theorem]{Remark}
\newtheorem*{propositiona}{Proposition}
\theoremstyle{definition}
\begin{document}
\title{Rate of convergence of linear functions on the unitary group} 
 \author{J. P. Keating,
  F. Mezzadri and B. Singphu}

\date{}

\maketitle

\begin{abstract}
  We study the rate of convergence to a normal random variable of the
  real and imaginary parts of $\tr A_N U$, where $U$ is an $N \times
  N$ random unitary matrix and $A_N$ is a deterministic complex
  matrix.  We show that the rate of convergence is $O\left(N^{-2 +
      b}\right)$, with $0\le b < 1$, depending only on the asymptotic
  behaviour of the singular values of $A_N$; for example, if the
  singular values are non-degenerate, different from zero and $O(1)$
  as $N \to \infty$, then $b=0$.  The proof uses a Berry-Ess\'een
  inequality for linear combinations of eigenvalues of random unitary
  matrices, and so appropriate for strongly dependent random
  variables.
\end{abstract}

\vspace{.25cm}

\hspace{.26cm} 2010 MSC: 15B52, 60F05

\section{Introduction}
\label{sec:intro}
The value distributions of traces of random unitary matrices have been
studied extensively over the past fifteen
years~\cite{DS94,HKSSZ96,Joh97,Sos00,DE01,HR03,PV04,Sto05}. The main
reason is that they are connected with the linear statistics
\begin{equation}
  \label{eq:lin_stat}
  S_N(\chi) := \chi\left(e^{i\theta_1}\right) + \cdots
  + \chi\left(e^{i\theta_N}\right),
\end{equation}
where $\chi$ is a suitable test function and $e^{i\theta_1},\ldots,
e^{i\theta_N}$ are the eigenvalues of $N \times N$ unitary matrices
$U$ distributed according to Haar measure.  It turns out that in many
applications in particle physics, open quantum systems, quantum
chromodynamics and scattering theory it is interesting to understand
the asymptotic ($N \to \infty$) behaviour not only of $\tr U$ but of
the more general random variable
\begin{equation}
  \label{Eq:def_zn}
Z_N := V_N + iW_N = \tr A_N U,
\end{equation}
where $V_N$ (respectively $W_N$) is the real (imaginary) part of
$Z_N$, and $A_N$ is a deterministic complex matrix. (See,
\textit{e.g.},~\cite{Sam80,Bar80,BB96,SW03} and references therein.)
In other words, we want to understand the distribution of linear
combinations of the elements of random unitary matrices. In general,
this type of question arises when Random Matrix Theory is applied to
non-Hermitian quantum mechanics, an area of physics which has grown
rapidly in the last decades (see, \textit{e.g.}, \cite{Moi98,OPR03}
and references therein).  As we shall see, the invariance of Haar
measure on $\UN$ under group action implies that the distributions of
$V_N$ and $W_N$ are the same.  Therefore, we shall restrict our
attention to $V_N$.

Samuel~\cite{Sam80} and Bars~\cite{Bar80} computed the first few terms
in the cumulant expansion of $V_N$, which implicitly show that it
converges in distribution to a normal random variable when $N \to
\infty$. D'Aristotile \ea~\cite{ADN03} gave a rigorous proof of this
result. Collins and Stolz~\cite{CS08} proved a multivariate version of
this theorem: they showed that a vector of the form
\begin{equation}
  \label{eq:multivariate}
  \left(\tr A^{(1)}_N U,\dotsc,\tr A^{(r)} U\right),
\end{equation}
where $r$ is independent of $N$, converges to a joint normal
distribution.

In her PhD thesis, Meckes~\cite{Me06,Me08} studied the rate of
convergence of $V_N$ to a central limit theorem using Stein's method
of exchangeable pairs.  Let us normalise $A_N$ so that $\tr A_N A_N^*
= N$, where $A_N^*$ is the conjugate transpose of $A_N$, and denote by
$\mathcal{N}(\mu,\sigma^2)$ a normal random variable with mean $\mu$
and variance $\sigma^2$.  Meckes proved that the distance of $V_N$ to
$\mathcal{N}(0,1/2)$ in the total variation metric on probability
measures is bounded by $c_NN^{-1}$, where $c_N$ is asymptotic to
$2\sqrt{2}$. Chatterjee and Meckes~\cite{CM08} obtained a rate of
order $O(N^{-1})$ in the multivariate setting too, and showed that the
constant is linear in $r$.

The bound computed by Meckes holds for any $A_N \in \mathbb{C}^{N
  \times N}$, subject to the constraint $\tr A_NA_N^* = N$.  However,
given a fixed sequence $\left \{ A_N \right \}_{N > 1}$, it is natural
to ask how the rate of convergence of $V_N$ depends on $A_N$. The
purpose of this paper is to show that this rate is $O\left(N^{-2 + b
  }\right)$, where $0\le b < 1$, depending only on the leading order
asymptotics as $N \to \infty$ of the greatest singular value of $A_N$.
For example, if the elements of $A_N$ do not grow with $N$ --- which
is what one would expect for a generic sequence $\left \{A_N\right
\}_{N > 1}$ --- then $b=0$ and the rate of convergence is
$O\left(N^{-2}\right)$.  When $b=1$ only a finite fraction of the
singular values is different from zero in the limit $N \to
\infty$. For technical reasons, which we will discuss in
section~\ref{sse:mom_calc}, we exclude the case $b=1$.  Meckes' bound
$c_N N^{-1}$ does not discern the dependence of the rate of
convergence on the singular values of $A_N$, and our result implies
that it is sharp only when $b=1$.

Our approach is based on the method of moments, which allows us to
prove a Berry-Ess\'een inequality for the eigenvalues of random
unitary matrices.  In general, Berry-Ess\'een bounds are used to prove
central limit theorems for sums of independent or weakly dependent
random variables.  It is notable that such a bound exists for sums
of eigenvalues of matrices in $\UN$, which are strongly correlated.

When $A_N$ is the identity, then $Z_N$ is a class function and the
underlying group structure of $\UN$ can be exploited. For general
$A_N$ these group-theoretical tools are not available. There is a
considerable literature addressing the problem of the distribution of
$\tr U^j$, where $j \in \mathbb{Z}_{+}$. Diaconis and
Shahshahani~\cite{DS94}, and independently Haake \ea~\cite{HKSSZ96},
proved that it convergences in distribution to $\sqrt{j} Z$, where $Z$
is a standard normal complex random variable.  Diaconis and
Shahshahani's proof is based on the method of moments; they showed
that the $k$-th moments of $\tr U^j/\sqrt{j}$ are exactly Gaussian for
$k \le N$.  This property prompted Diaconis to conjecture that the
convergence to a normal random variable is very fast, either
exponential or even superexponential.  Consider the error
\begin{equation}
  \label{eq:error_term}
  e(N) := \sup_{x \in \mathbb{R}}\abs{F_N(x) - \Phi(x)},
\end{equation}
where
\begin{equation}
  \label{eq:Phi}
  \Phi(x) := \frac{1}{\sqrt{2\pi}}\int_{-\infty}^x e^{-\frac{t^2}{2}}dt
\end{equation}
and $F_N(x)$ is the distribution function of $\sqrt{2/j}\rpart \tr
U^j$, \ie
\begin{equation}
  \label{eq:dis_fun}
  F_N(x) :=\int_{-\infty}^x f_N\left(t\right)dt,
\end{equation}
where $f_N$ is the probability density function
(\textit{p.d.f.}). Johannson~\cite{Joh97} proved that $e(N)=
O_\epsilon\left(N^{-\epsilon N}\right)$.  He also showed that the
distance of $\sqrt{2/j}\rpart \tr U^j$ to $\mathcal{N}(0,1)$ in the
total variation norm is of the same order.  Such a rate of convergence
to a central limit theorem is unusual in probability theory.  The
approach that we use to achieve our bounds also sheds light on why the
convergence of $\tr U^j $ is so fast.

Subsequently, many authors have refined or improved Diaconis and
Shahshahani's results. Soshnikov~\cite{Sos00} showed that the linear
statistics~\eqref{eq:lin_stat} converge in distribution to a normal
random variable in the \textit{mesoscopic} regime too, \ie if one
considers eigenvalues in an arc of length $L_N$ with $L_N/N \to 0$ as
$N \to \infty$. Hughes and Rudnick~\cite{HR03} studied the
\textit{scaling limit} $L_N=N$.  It turns out that the number of
moments of $S_N(\chi)$ that are exactly Gaussian depends on the class
of test functions considered.  Diaconis and Evans~\cite{DE01} used the
results in~\cite{DS94} to study the asymptotic distributions of
integrals of the type $\int_{\mathbb{S}}fd\Xi$, where $\mathbb{S}$ is
the unit circle and $\Xi_N$ is the random point measure that places a
unit mass at each eigenvalue $e^{i\theta_j}$. Pastur and
Vasilchuk~\cite{PV04} and Stolz~\cite{Sto05} gave alternative proofs
of the convergence to normal random variables of $\tr U^j$.

This article is structured as follows.  In
\S\ref{se:statement_results} we discuss the background of the problem
and introduce our main results. The moments and cumulants of $V_N$ can
be computed using the character theory of the symmetric group; these
calculations are detailed in~\S\ref{sec:mom_cum}.
In~\S\ref{sec:proof_be} we present the proof of the Berry-Ess\'een
inequality.  Finally, \S\ref{sec:proof_main_th} and
\S\ref{se:tot_proof} are devoted to the proofs of the main theorems.


\section{Statement of results}
\label{se:statement_results}

\subsection{Preliminaries}
Let us introduce the random variables
\begin{equation}
  \label{eq:def_XN_YN}
  X_N := \left(\rpart \tr A_N U\right)/\sigma = V_N/\sigma\quad
\text{and} \quad  Y_N :=\left(\ipart \tr A_N U\right)/\sigma 
= W_N/\sigma,
\end{equation}
where $U$ is an $N \times N$ unitary matrix distributed according to
Haar measure and
\begin{equation}
   \label{eq:renormalization}
   \sigma^2 := \frac{\tr A_NA_N^*}{2N}.
\end{equation}
The matrices in a given sequence $\left \{A_N\right \}_{N >1}$ can be
normalised so that $\sigma^2$ is independent of $N$.

Using the polar decomposition we can factorize $A_N$ in the product
\begin{equation}
  \label{eq:sing_value_dec}
  A_N = H_NV,
\end{equation}
where $V \in \UN$ and $H_N = \sqrt{A_N A_N^*}$ is
positive-semidefinite.  Let us also write $U =W\Theta W^*$, where $W
\in \UN$ and $\Theta =
\diag\left(e^{i\theta_1},\dotsc,e^{i\theta_N}\right)$.  Since Haar
measure is invariant under group action, the random variable $\tr H_N
U/\sigma$ has the same distribution as $\tr A_N U/\sigma$.  Thus,
without loss of generality, we can restrict $A_N$ to the set of
positive-semidefinite matrices. Furthermore, we have
\begin{equation}
  \label{eq:trace_diag}
  \tr A_N U = \tr A_N W\Theta W^* = \tr W^* A_N W
  \Theta = \tr \tilde A_N \Theta = \sigma \sum_{j=1}^Na_j e^{i\theta_j},
\end{equation}
where $\tilde A_N$ is Hermitian positive-semidefinite too and $\sigma
a_j  \ge 0$ are its diagonal elements.  Therefore, we can
write
\begin{subequations}
  \label{eq:trace_like}
  \begin{align}
    \label{eq:real_p}
     X_N & = a_1 \cos \theta_1 + \dotsb + a_N \cos \theta_N, \\
    \label{eq:imag_p}
     Y_N &  = a_1 \sin \theta_1 + \dotsb + a_N \sin \theta_N.
  \end{align}
\end{subequations}
Since Haar measure is invariant under translation, $X_N$ and $Y_N$ have
the same probability distribution. Thus, we shall  restrict our
attention to $X_N$.

The characteristic function of $X_N$ is defined by
\begin{equation}
  \label{eq:def_char_fun}
  \psi_N(\xi) := \mathbb{E}_{\UN}\left\{e^{i\xi X_N}\right\}.
\end{equation}
It admits a representation as an integral over the unitary group.  We
have
\begin{equation}
  \label{eq:iz_type_int}
  \psi_N(\xi) = \int_{\UN}\exp\left(\frac{i\xi}{2\sigma}\Bigl(\tr A_N
    U + \tr A_N^* U^*\Bigr)\right)d\mu_{\mathrm{H}}(U),
\end{equation}
where $d\mu_{\mathrm{H}}$ denotes Haar measure over $\UN$.  When $A_N$
is not singular, such an integral can be evaluated
explicitly~\cite{BRT81} (see also~\cite{SW03} when the matrix in the
second trace is different from $A_N^*$):
\begin{equation}
  \label{eq:psiN_explicit}
  \psi_N(\xi) =
  \left(\frac{2\sigma}{\xi}\right)^{\frac{N\left(N-1\right)}{2}}
  \left(\prod_{j=1}^{N-1}j!\right) \frac{\det_{N\times
      N}\left(\nu_j^{k-1}J_{k-1}\left(\xi \nu_j/
          \sigma\right)\right)}{\prod_{1 \le j < k \le N}\left(\nu_k^2
      - \nu_j^2\right)},
\end{equation}
where $\nu_1,\dotsc,\nu_N$ are the singular values of $A_N$ and $J_k$
is the Bessel function of the first kind.  Unfortunately, this
beautiful formula is not the best starting point for a straightforward
asymptotic analysis.  In order to determine the rate of convergence of
$X_N$, we will need to control $\psi_N(\xi)$ when $\xi$ grows like a
power of $N$.  This means that $N$ appears as a parameter in both the
argument and the index of the Bessel functions. The facts that the
asymptotic limit of $J_k(x)$ as $x \to \infty$ is not uniform in the
index, and that all the Bessel functions from $J_0$ to $J_{N-1}$
appear in the determinant, render the analysis of
formula~\eqref{eq:psiN_explicit} difficult.

Damgaard and Splittorff~\cite{DS00} computed the first few terms of
the asymptotic expansions of integral~\eqref{eq:iz_type_int} for
``low-mass'' and ``large-mass''.  In our formalism, this means in the
limit as $\xi \to 0$ and $\xi \to \infty$.

The approach that we adopt is based on the method of moments, which
can be computed explicitly up to the $2N$-th for any matrix $A_N$,
whether singular or not.  The only constraint that we impose on the
sequence $\left \{A_N \right \}_{N > 1}$ is the
normalisation~\eqref{eq:renormalization}.

Our results will depend on the asymptotic properties of the singular
values of $A_N$.  Therefore, we need to introduce quantities that
characterise their behaviour in the limit as $N \to \infty$. Let us
order the singular values of $A_N$ so that $\nu_1 \le \dotsb \le
\nu_N$ and let $\nu_N^2 = O\left(N^b\right)$.  Then, define
\begin{equation}
  \label{eq:bound}
  k := \inf \left \{ c\in \mathbb{R}  \lvert \: \nu_N^2 \le cN^b, \;
    \forall N > 1\right \}.
\end{equation}
Since $b$ is optimal, the normalization~\eqref{eq:renormalization}
implies that $0\le b \le 1$ and $k > 0$. The meaning of $b$ and $k$
can be illustrated with a few examples.  If all the matrix elements of
$A_N$ are $O(1)$ as $N \to \infty$, then $b=0$. Alternatively,
consider the sequence of matrices
\begin{equation}
  \label{eq:ex_1}
  A_N = \diag \left(\sqrt{2N},0,\dotsc,0\right).
\end{equation}
Then $b=1$ and $k=2$. In other words, $b$ not only gives the rate of
growth of $\nu_N$, but also measures how sparse the set of singular
values is in the limit $N \to \infty$.


\subsection{Rates of convergence}

Using the same notation as in \S\ref{sec:intro}, $F_N$ and $\Phi$ will
denote the distribution functions of $X_N$ and of a standard normal
random variable respectively; similarly, $f_N$ is the \textit{p.d.f.}
Furthermore, we shall write
\begin{equation}
  \label{eq:normal_pdf_cf}
  \phi(x) := \frac{e^{-x^2/2}}{\sqrt{2\pi}}\quad  \text{and} \quad 
  \psi(\xi) :=  \int_{-\infty}^\infty e^{i\xi x} \phi(x)dx = e^{-\xi^2/2}. 
\end{equation}

\begin{theorem}
\label{th:main_theorem}
Suppose $\left\{A_N \right \}_{N > 1}$ is a sequence of matrices such
that $\sigma^2= \tr A_N A_N^* / (2N)$ is independent of $N$ and that
$0\le b <1$. We have
\begin{equation}
  \label{eq:error}
  e\left(N\right) := \sup_{x \in \mathbb{R}} \left \lvert F_N(x) -
    \Phi(x) \right \rvert = O\left( N^{-2 + b}\right), \quad N \to
  \infty.
\end{equation}
\end{theorem}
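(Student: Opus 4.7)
My plan would be to combine the Berry-Ess\'een smoothing inequality (the main tool of \S\ref{sec:proof_be}) with the explicit moment and cumulant calculations promised in \S\ref{sec:mom_cum}. The Berry-Ess\'een bound should reduce the estimate of $e(N)$ to a statement of the form
\[ e(N) \le C_1 \int_{-T}^{T} \frac{|\psi_N(\xi) - \psi(\xi)|}{|\xi|} \, d\xi + \frac{C_2}{T}, \]
valid for any $T > 0$, so that the task splits into two problems: (i) bound the integrand in terms of the singular values of $A_N$ and of $N$, and (ii) choose the cut-off $T = T(N)$ so as to balance the two error sources against each other.

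For step (i), I would expand $\log \psi_N(\xi)$ as a cumulant series. Since $X_N$ converges in distribution to $\mathcal{N}(0,1)$, the second cumulant equals $1+o(1)$ and the higher cumulants vanish as $N\to\infty$. The character-theoretic formulas of \S\ref{sec:mom_cum} should express the $m$-th cumulant of $X_N$ as a combination of traces of powers of $H_N^2 = A_NA_N^*$ (weighted by inverse powers of $N$ coming from the Weingarten/character sums). Using the elementary estimate $\tr(A_NA_N^*)^m \le \nu_N^{2(m-1)}\tr(A_NA_N^*) \le 2\sigma^2\, k^{m-1}\, N^{1 + (m-1)b}$, each of these traces grows at most like $N^{1+(m-1)b}$. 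After grouping the combinatorial contributions, this should produce a bound of the form
\[ |\psi_N(\xi) - \psi(\xi)| \le P(\xi)\, e^{-\xi^2/4}\, N^{-2+b}, \]
where $P$ is a fixed polynomial. The exponent $2-b$ arises precisely here, and the excluded case $b=1$ degenerates because only a bounded fraction of singular values contributes in the limit, so the variance correction itself is of the same order as the error we are trying to bound.

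For step (ii), once such a Gaussian-type estimate holds uniformly on the truncation interval, the integral on $[-T,T]$ is $O(N^{-2+b})$, while the tail term $C_2/T$ can be made $O(N^{-2+b})$ by taking $T$ to be a suitable power of $N$, for example $T = N^{2-b}$.

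The main obstacle, and the reason the proof cannot be read off from naive moment matching, is step (i): the character-theoretic formulas give exact moments only up to order $2N$, so the cumulant series cannot be continued indefinitely, and one must show that the Gaussian-type bound on $|\psi_N(\xi) - \psi(\xi)|$ survives \emph{uniformly} on the whole interval $[-T,T]$ rather than just in a small neighbourhood of the origin. Handling this junction -- patching a cumulant expansion near the origin with a direct tail estimate for $|\xi|$ of order $N^{2-b}$, while at the same time extracting the dependence on $k$ and $b$ through the traces of powers of $A_NA_N^*$ -- is where I expect the real work of the proof to lie.
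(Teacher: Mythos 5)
Your outline of the first half of the argument matches the paper's: Feller's inequality \eqref{eq:feller} reduces the problem to bounding $|\psi_N(\xi)-\psi(\xi)|$, and the cumulant expansion of $\log\psi_N$, controlled via $\tr(A_NA_N^*)^j \le (kN^b)^{j-1}\tr(A_NA_N^*)$, produces the Berry-Ess\'een bound of Theorem~\ref{th:berry-esseen} with the factor $N^{-(2-b)}$. You have also correctly identified that the difficulty lies in ``patching a cumulant expansion near the origin with a direct tail estimate.'' However, the resolution you gesture at --- showing the Gaussian-type bound ``survives uniformly on the whole interval $[-T,T]$'' with $T=N^{2-b}$ --- cannot work, and this is the genuine gap.

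The cumulant series for $\log\psi_N$ has a finite radius of convergence. Lemma~\ref{lem:zer_psi} establishes that $\psi_N$ is nonvanishing (hence $\log\psi_N$ analytic) only for $|\xi| \le \delta N^{(1-b)/2}$, and this is essentially optimal: beyond that radius $\psi_N$ may vanish and $\log\psi_N$ is singular. For $b<1$, $N^{(1-b)/2}$ is far smaller than $N^{2-b}$, so the cumulant bound simply does not extend to the interval you need, no matter how cleverly the traces $\tr(A_NA_N^*)^j$ are estimated. Moreover the paper takes the Feller cut-off $T_N=N^\gamma$ with $\gamma>2$ (so that $24m/(\pi T_N)$ is negligible), leaving a large intermediate window $\delta N^{(1-b)/2} < \xi < N^\gamma$ on which no pointwise control of $\psi_N$ is available.

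The missing idea is the mollification argument of \S\ref{sse:reg_prop}--\S\ref{sse:sing_deg}. One replaces $F_N$ by $F_N^* = F_N * \chi_\epsilon$ with a compactly supported bump $\chi_\epsilon$, so $\psi_N^* = \psi_N\hat\chi_\epsilon$. The Fourier transform $\hat\chi_\epsilon$ has stretched-exponential decay $\exp(-(\epsilon\xi)^{1/2})$ (computed by steepest descent in the appendix), which supplies the decay that $\psi_N$ itself is not known to have in the intermediate range. Choosing $\epsilon \sim 1/S_N = 1/(\delta N^{(1-b)/2})$ makes the third integral in \eqref{eq:split_star} small. One then needs Lemma~\ref{le:diff_Deltas}, $e(N) \le C e^*(N)$, whose hypothesis $|\Delta'(x)|\le e(N)/\eta_N$ is supplied by the derivative bound \eqref{eq:bound_deriv}, which in turn rests on the regularity of $f_N$ (Lemma~\ref{le:absolutely_continuity}, proved via a differential-forms/Stokes'-theorem argument on the maximal torus to get $\psi_N(\xi)=o(\xi^{-1})$). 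None of this machinery appears in your sketch, and without it the intermediate range of $\xi$ is left uncontrolled.
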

As we shall see, the power of minus two in~\eqref{eq:error} is
determined by the Haar measure on $\UN$. The sequence $\left \{ A_N
\right \}_{N > 1}$ influences the rate of convergence only through the
parameter $b$, which is a measure of the asymptotic distribution of
the singular values of the matrices $A_N$.  

We can prove an analogous statement in the total variation norm.
\begin{theorem}
  \label{th:total_variation}
  Let $\left\{A_N \right \}_{N > 1}$ be a sequence
  satisfying the same conditions as in theorem~{\rm
    \ref{th:main_theorem}}. We have
\begin{equation}
  \label{eq:total_variation_2}
  \int_{-\infty}^\infty\left \lvert f_N(x) -  \phi(x) \right \rvert
  dx = O_\epsilon\left(N^{-2  + b  + \epsilon}\right), \quad N \to \infty,
\end{equation}
where $\epsilon \in \left(0, \frac{1}{2} (1-b)\right)$.
\end{theorem}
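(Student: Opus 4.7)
The approach is to recast the $L^1$ distance between the two densities as an $L^2$ question about characteristic functions and then invoke the pointwise estimates on $\psi_N(\xi) - \psi(\xi)$ that the proof of Theorem~\ref{th:main_theorem} must establish along the way. For a parameter $T = T(N) > 0$ to be chosen, I split
\begin{equation*}
  \int_{-\infty}^{\infty}|f_N(x) - \phi(x)|\,dx
  = \int_{|x| \le T}|f_N - \phi|\,dx + \int_{|x|>T}|f_N-\phi|\,dx,
\end{equation*}
bound the first term by Cauchy--Schwarz and Plancherel,
\begin{equation*}
  \int_{|x|\le T}|f_N - \phi|\,dx \le \sqrt{2T}\,\|f_N - \phi\|_{L^2(\mathbb{R})} = \sqrt{T/\pi}\,\|\psi_N - \psi\|_{L^2(\mathbb{R})},
\end{equation*}
and control the tail using Gaussian decay for $\phi$ together with Markov's inequality applied to a moderately large even moment $\mathbb{E}[X_N^{2k}]$, which is accessible from the machinery of Section~\ref{sec:mom_cum}. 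Since the first $O(N)$ moments of $X_N$ match their Gaussian counterparts, the tail contribution is negligible relative to $N^{-2+b+\epsilon}$ as soon as $T$ exceeds a small positive power of $N$.

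The substantive work then lies in bounding $\|\psi_N - \psi\|_{L^2(\mathbb{R})}$. I would split the $\xi$-axis at a threshold $\Xi = \Xi(N)$ inherited from the Berry--Ess\'een machinery that underlies Theorem~\ref{th:main_theorem}. On $|\xi| \le \Xi$ that argument should supply a Gaussian-weighted pointwise estimate of the rough form $|\psi_N(\xi) - \psi(\xi)| \le C N^{-2+b}\, g(\xi)$ with $g \in L^2(\mathbb{R})$, contributing $O(N^{-4+2b})$ to $\|\psi_N - \psi\|_{L^2}^2$. Outside this window both functions are individually small: $\psi$ decays as a Gaussian, and $\psi_N$ is estimated either from the cumulant expansion computed in Section~\ref{sec:mom_cum}, truncated at the highest order of exact Gaussian matching, or, when $A_N$ is non-singular, directly from the explicit formula~\eqref{eq:psiN_explicit}. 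Combining the two contributions should give $\|\psi_N - \psi\|_{L^2} = O(N^{-2+b})$, and plugging this into the inequality above yields a total variation bound of order $\sqrt{T}\,N^{-2+b}$; the choice $T = N^{2\epsilon}$ then produces the announced rate $O(N^{-2+b+\epsilon})$.

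The hardest step --- and the source of the restriction $\epsilon \in (0,(1-b)/2)$ --- is the pointwise control of $\psi_N(\xi)$ for $|\xi|$ beyond the Berry--Ess\'een window $\Xi(N)$. The natural scale of this window is $N^{(1-b)/2}$, which is also the scale at which the method of moments ceases to produce exact matching with Gaussian moments and which is the reason the case $b = 1$ must be excluded from Theorem~\ref{th:main_theorem}. Pushing $\epsilon$ up towards $(1-b)/2$ forces one to choose the $x$-cutoff $T$ commensurate with this scale and erodes the efficiency of the Plancherel step; pushing $\epsilon$ past $(1-b)/2$ would require estimates on $\psi_N$ in a regime not accessible through moments alone. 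The softer ingredients --- the truncation argument of the first paragraph and the classical tail bounds --- are routine; essentially all the difficulty of Theorem~\ref{th:total_variation} lies in extracting sufficiently sharp $L^2$ control on $\psi_N - \psi$ from the characteristic-function estimates produced in the course of proving Theorem~\ref{th:main_theorem}.
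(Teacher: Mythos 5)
Your approach is genuinely different from the paper's, and it contains a real gap at exactly the point you identify as hard. The paper's proof of Theorem~\ref{th:total_variation} does not touch $L^2$ or Plancherel at all. It uses the pointwise bound $\lvert f_N(x) - \phi(x)\rvert \le C_\zeta N^{-2+b+\zeta}$, which is a by-product of the smoothing machinery of Section~\ref{sse:sing_deg} (equations~\eqref{eq:bound_der}--\eqref{eq:bound_deriv}, with $\eta_N = N^{-\zeta}$), splits the $L^1$ integral at $|x| = N^{\zeta'}$, and on the tail simply rewrites $2\int_{N^{\zeta'}}^\infty f_N$ and $2\int_{N^{\zeta'}}^\infty\phi$ in terms of $F_N$ and $\Phi$, invoking Theorem~\ref{th:main_theorem} itself plus the complementary error function bound~\eqref{eq:ineq_er_fun}. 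Setting $\epsilon = \zeta + \zeta'$ finishes. This is strictly more elementary than your route: no $L^2$ theory, and the tail is handled through the already-proved sup bound on $F_N - \Phi$ rather than through a fresh moment/Markov argument.

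The gap in your proposal is the bound on $\int_{|\xi|>\Xi}|\psi_N(\xi)|^2\,d\xi$. The two devices you offer for controlling $\psi_N$ beyond the Berry--Ess\'een window do not work. The cumulant expansion~\eqref{eq:cumulant_def} only converges for $|\xi| \le \delta N^{(1-b)/2}$ (Lemma~\ref{lem:zer_psi}), so it says nothing about $|\xi|>\Xi$; and the explicit Bessel formula~\eqref{eq:psiN_explicit} holds only for non-singular $A_N$ and is, as the paper itself remarks, resistant to direct asymptotic analysis because the index and the argument of the Bessel functions both scale with $N$. In fact the paper's entire smoothing construction in \S\ref{sse:sing_deg} --- convolving with $\chi_\epsilon$, computing $\hat\chi_\epsilon$ by steepest descent, and the comparison lemma~\ref{le:diff_Deltas} --- exists precisely because $\int_{S_N}^\infty\lvert\psi_N\rvert\,d\xi$ cannot be estimated from moments or cumulants alone; the best the paper has without smoothing is the qualitative $\psi_N(\xi)=o(\xi^{-1})$ of Lemma~\ref{le:absolutely_continuity}. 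Your Plancherel route would need a quantitative version of this decay, which is not available. A secondary, smaller issue: your tail estimate leans on ``the first $O(N)$ moments of $X_N$ match their Gaussian counterparts,'' but exact matching holds only for $A_N = I_N$ (Remark~\ref{re:id_calc}); for general $A_N$, Proposition~\ref{pr:bounds_mom} gives only asymptotic agreement with error $O((mk)^mN^{-1})$, together with the crude bound~\eqref{eq:mom_limit_41}. Either is enough for a Markov-type tail bound, so this does not sink the tail step, but the statement as written is wrong.
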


As we discussed in the introduction, for technical reasons
theorems~\ref{th:main_theorem} and~\ref{th:total_variation} exclude
$b=1$.  Meckes'~\cite{Me08} result suggests that they are correct for
$b=1$ too.

The starting formula to prove theorems~\ref{th:main_theorem}
and~\ref{th:total_variation} is
\begin{equation}
  \label{eq:feller}
  e\left(N\right)\le \frac{1}{\pi} \int_{-T_N}^{T_N}\left \lvert
    \frac{\psi_N(\xi) - \psi(\xi)}{\xi}\right \rvert d\xi +
  \frac{24m}{\pi T_N}
\end{equation}
(see~\cite{Fel70}, p. 538), where 
\begin{equation}
  \label{eq:m_def}
   m := \max_{x \in \mathbb{R}}\abs{\phi(x)} =\frac{1}{\sqrt{2\pi}}
\end{equation}
and $T_N$ is an appropriate cut-off. Formula~\eqref{eq:feller}
transfers the problem of computing $e(N)$ into that of finding a bound
for $\abs{\psi_N\left(\xi\right) - \psi\left(\xi\right)}$
for sufficiently large $\xi$.
 \begin{theorem}
\label{th:berry-esseen}
Let $C$ and $\delta$ be two constants independent of $N$ and let $0
\le \abs{\xi} <  \delta N^{(1-b)/2}$.  We have
\begin{equation}
  \label{eq:bound_1}
  \left \lvert \psi_N(\xi) - \psi(\xi) \right \rvert \le \frac{C\xi^4}%
  {N^{2-b}}e^{-\xi^2/2}.
\end{equation}
\end{theorem}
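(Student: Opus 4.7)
The plan is to express $\psi_N$ through its cumulants and reduce the theorem to a cumulant estimate supplied by the moment computations of Section~\ref{sec:mom_cum}. By the normalisation~\eqref{eq:renormalization}, $X_N$ has mean $0$ and variance $1$; since Haar measure on $\UN$ is invariant under $U \mapsto -U$, the law of $X_N$ is symmetric, so all odd cumulants vanish. Hence on a neighbourhood of $0$ on which the cumulant expansion converges one may write
\[
\psi_N(\xi) \;=\; \exp\left(-\frac{\xi^2}{2} + R_N(\xi)\right),
\qquad R_N(\xi) \;:=\; \sum_{k \ge 2}\frac{(i\xi)^{2k}}{(2k)!}\,\kappa_{2k}(X_N),
\]
so that $\psi_N(\xi) - \psi(\xi) = e^{-\xi^2/2}\bigl(e^{R_N(\xi)} - 1\bigr)$. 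The theorem then reduces to showing $\abs{e^{R_N(\xi)} - 1} = O\bigl(\xi^4/N^{2-b}\bigr)$ uniformly for $\abs{\xi} < \delta N^{(1-b)/2}$.

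The core step is to extract from the character-theoretic moment formulae of Section~\ref{sec:mom_cum} a uniform cumulant bound of the form $\abs{\kappa_{2k}(X_N)} \le C_k\, N^{-(k-1)(2-b)}$ for every $k \ge 2$, with $C_k$ growing at most polynomially in $k$. Heuristically, the leading contribution to $\kappa_{2k}$ has the shape $c_k\sum_j a_j^{2k}/N^{2k-1}$; by~\eqref{eq:bound} and the normalisation one has $\sum_j a_j^{2k} \le (\max_j a_j^2)^{k-1}\sum_j a_j^2 \le (kN^b)^{k-1}\cdot 2N = O(N^{1+(k-1)b})$, which gives exactly $\kappa_{2k} = O(N^{-(k-1)(2-b)})$. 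The case $k=2$ alone already produces the scaling $\kappa_4 = O(N^{-(2-b)})$ that appears in~\eqref{eq:bound_1}.

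Once this is in hand, the rest is straightforward. For $\abs{\xi} < \delta N^{(1-b)/2}$, the $k$-th term of $R_N(\xi)$ satisfies
\[
\frac{\abs{\kappa_{2k}}\,\xi^{2k}}{(2k)!} \;\le\; \frac{C_k}{(2k)!}\cdot \frac{\xi^4}{N^{2-b}} \cdot \left(\frac{\delta^2}{N}\right)^{k-2},
\]
since combining the cumulant bound with $\xi^{2(k-2)} \le \delta^{2(k-2)} N^{(1-b)(k-2)}$ gives net exponent $-(k-2)-(2-b)$ on $N$. Summing the geometric series in $k$ yields $\abs{R_N(\xi)} \le C\,\xi^4/N^{2-b}$, which is in particular bounded by $C\delta^4 N^{-b}$ throughout the range. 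The elementary inequality $\abs{e^z - 1} \le \abs{z}\,e^{\abs{z}}$ then gives $\abs{e^{R_N(\xi)} - 1} \le C'\,\xi^4/N^{2-b}$, and multiplying by $e^{-\xi^2/2}$ completes the proof.

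The main obstacle is the uniform cumulant estimate. Passing from the symmetric-group character expansion of the moments to cumulants involves delicate cancellations, and the contribution of each cycle type must be analysed so that the combinatorial factor (the centraliser order in $S_{2k}$) and the arithmetic factor $\prod_j a_j^{2m_j}$ are controlled simultaneously, with constants polynomial in $k$ so the tail sums geometrically. The restriction $b<1$ enters precisely here: at $b=1$ the exponent $-(k-1)(2-b)$ collapses to $-(k-1)$, the cut-off $\delta N^{(1-b)/2}$ degenerates to a constant, and the geometric summation that absorbed the tail of $R_N$ into a single $\xi^4/N^{2-b}$ prefactor breaks down.
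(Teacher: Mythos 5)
Your high-level strategy is the same as the paper's: write $\log\psi_N$ via its cumulant expansion, show the remainder beyond the Gaussian term is $O(\xi^4/N^{2-b})$ using a cumulant bound, and finish with $\lvert e^z-1\rvert\le\lvert z\rvert e^{\lvert z\rvert}$. However, there are two genuine gaps, and one point where your reasoning is miscalibrated.

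First, you never justify that the cumulant expansion converges on the range $\lvert\xi\rvert<\delta N^{(1-b)/2}$. The series $\log\psi_N(\xi)=\sum_k\kappa_{2k}(i\xi)^{2k}/(2k)!$ only makes sense up to the nearest zero of $\psi_N$, and proving there is none in this $N$-dependent disc is precisely the content of the paper's lemma~\ref{lem:zer_psi}, which occupies roughly half the proof of the theorem. The paper's argument compares the alternating-series truncations of $\psi_N$ with those of $e^{-\omega\xi^2/2}$ using the moment bounds of proposition~\ref{pr:bounds_mom}; this step cannot be skipped.

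Second, your ``core step'' asserts a uniform cumulant bound $\lvert\kappa_{2k}\rvert\le C_k\,N^{-(k-1)(2-b)}$ for \emph{every} $k\ge2$, but the character-theoretic machinery (Samuel's formula~\eqref{eq:samuel_formula}, corollary~\ref{co:proof_s_f}) is valid only for $m\le N$, as remark~3.5 makes explicit. Theorem~\ref{th:as_cum} therefore only gives $\kappa_{2m}=O\bigl((2m)!\,N^{-(2-b)(m-1)}\bigr)$ for $m\le N$. The tail $m>N$ is handled in the paper by a separate argument: once convergence of the series is established (lemma~\ref{lem:zer_psi}), the terms must tend to zero, which gives $\kappa_{2m}=o\bigl((2m)!\,(\delta^2 N)^{-m(1-b)}\bigr)$ for large $m$, and the remainder is then split into a finite sum and a tail, as in~\eqref{eq:bound_series}. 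Without this, your geometric summation over \emph{all} $k\ge2$ is resting on an unproved estimate.

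Finally, your requirement that $C_k$ grow ``at most polynomially in $k$'' is both unproved and unnecessary: since each term of $R_N$ carries a $1/(2k)!$, a constant of order $(2k)!$ (which is what the paper actually obtains) already yields a geometric series $\sum_{k\ge2}(\delta^2/N)^{k-2}$. Your intuition that tight control of the constants is the crux is misplaced; the real obstacles are the radius-of-convergence lemma and the $m>N$ tail. Relatedly, your explanation of why $b<1$ is needed points to the geometric summation, but the actual obstruction at $b=1$ is that the moment bound~\eqref{eq:mom_limit_41} becomes too weak to run the argument in lemma~\ref{lem:zer_psi} over a widening range of $\xi$.
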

Throughout this paper $C$ will denote a constant, which may be
different at each occurrence.

\begin{remark}
\label{re:berry-essen}
Theorem~{\rm \ref{th:berry-esseen}} is of interest in its own
right. Such bounds are called \textbf{Berry-Ess\'een inequalities}.
They determine rates of convergence to central limit theorems, usually
for sums of independent or weakly dependent random variables. The
eigenvalues of random unitary matrices, however, exhibit a high degree
of correlation.

For eigenvalues of random unitary matrices, one consequence of such a
strong dependence is that the variance $\sigma^2 = \tr A_N A_N^*/(2N)$
remains finite in the limit $N \to \infty$. Instead, the variance of
the sum of $N$ independent random variables grows linearly in $N$.
When the moments diverge in the limit $N \to \infty$, just the first
few are enough to determine an optimal bound.  Since the right-hand
sides of equations~\eqref{eq:real_p} and~\eqref{eq:imag_p} converge to
normal random variables without any normalisation, the proof of
theorem~{\rm \ref{th:berry-esseen}} requires knowing the first $2N$
moments of $X_N$.
\end{remark}

\section{Moments and cumulants of $X_N$}
\label{sec:mom_cum}

The purpose of this section is to provide bounds and asymptotic
formulae for the moments and cumulants of $X_N$ that will be needed to
prove theorem~\ref{th:berry-esseen}. Most of these can be derived from
the results of Samuel~\cite{Sam80}, which we summarise in
\S\ref{sse:averages_mat_el}.

\subsection{Averages of matrix elements and the symmetric group}
\label{sse:averages_mat_el}

Samuel~\cite{Sam80} studied averages of the form
\begin{multline}
  \label{eq:samuel_int}
  \int_{\UN} U_{i_1 j_1} \dotsm U_{i_m j_m}\overline{U}_{k_1 l_1} \dotsm
  \overline{U}_{k_m l_m}d\mu_{\mathrm{H}}(U) \\ 
    = \sum_{\sigma, \tau \in
    \mathfrak{S}_m}M_{\sigma , \tau}(N)\delta_{i_1 k_{\sigma 1}}
  \dotsm \,
    \delta_{i_m k_{\sigma m}}\delta_{j_1 l_{\tau 1}} \dotsm \,
    \delta_{j_m l_{\tau m}},
\end{multline}
where $\mathfrak{S}_m$ denotes the symmetric group of degree $m$.  The
moments of $X_N$ are simply linear combinations of these integrals.

All the information on the averages~\eqref{eq:samuel_int} is contained
in the coefficients $M_{\sigma, \tau}(N)$.  A permutation of $m$
letters can always be factorised in a product of cycles.  It turns out
that  $M_{\sigma, \tau}(N)$ depends only on the cycle decomposition of
$\sigma \tau^{-1}$.

The lengths of the cycles of a permutation identify a sequence of
non-negative integers $\lambda = (\lambda_1,\dotsc,\lambda_k)$ such
that
\begin{equation}
  \label{eq:part_def}
  \lambda_1 \ge \dotsb\ge \lambda_k \quad \text{and} \quad 
   \abs{\lambda} := \lambda_1 + \dotsb + \lambda_k=m.
\end{equation}
In other words, there exists a one-to-one correspondence between the
cycle structures of $\mathfrak{S}_m$ and the set of
\textit{partitions} of $m$. The partition $\lambda(g)$ is called
\textit{cycle-type} of $g \in \mathfrak{S}_m$.  Therefore, we shall
adopt the notation
\begin{equation}
  \label{eq:partition_notation}
   M_\lambda(N) :=  M_{\sigma , \tau}(N),
\end{equation}
where $\lambda$ is the cycle-type of $\sigma  \tau^{-1}$.

A partition of $m$ is denoted by $\lambda \vdash m$; the addends
$\lambda_j$ are the \textit{parts} of $\lambda$.  An alternative
notation for a partition is the \textit{frequency representation}: if
$\lambda$ contains $r_1$ $1$s, $r_2$ $2$s and so forth, we write
$\lambda = \left(1^{r_1}\,2^{r_2}\dotsc\,m^{r_m}\right)$.  The length
of a partition $\ell(\lambda)$ is the largest $j$ such that $\lambda_j
>0$.  We also have
\begin{equation}
  \label{eq:length}
  \ell\left(\lambda\right)= r_1 + \dotsb + r_m.
\end{equation}
We shall find it convenient not to distinguish between two partitions
that differ only by a sequence of zeros at the end.  For example, 
 $(3,1,1)$ and $(3,1,1,0,0,0)$ are clearly the same partition.

Elements of $\mathfrak{S}_m$ that belong to the
same conjugacy class share the same cycle-type. Therefore, the
conjugacy classes of $\mathfrak{S}_m$ can be labelled by the set of
the partitions of $m$.  The number of elements in the
conjugacy class $\lambda$ is 
\begin{equation}
  \label{eq:g_lambda}
  g_\lambda: = \frac{m!}{1^{r_1} r_1! \dotsm
    m^{r_m} r_m!}.
\end{equation}
Furthermore, the conjugacy classes of $\mathfrak{S}_m$ are in
one-to-one correspondence with its irreducible representations, which
can be identified with the set of partitions of $m$ too. Since
characters are class functions they depend only on the cycle-types of
the permutations. The notation $\chi^\mu_\lambda$ indicates the
character of the irreducible representation $\mu$ evaluated on
elements of cycle-type $\lambda$.

Sometimes it is convenient to represent partitions using Young
tableaux. If $\lambda = (\lambda_1,\dotsc,\lambda_k)$, we draw $k$
left-justified rows of boxes, or nodes; the top row should contain
$\lambda_1$ boxes, the next one $\lambda_2$ and so on.  For example,
let $\lambda = (5,4,4,3,1)$.  Then,
\begin{equation*}
\begin{Young}
  & & & & \cr
  & & & \cr
  & & &\cr 
  & & \cr
 \cr
\end{Young}
\end{equation*}
is the corresponding Young tableau.

Samuel~\cite{Sam80} derived an explicit formula for 
$M_\lambda\left(N\right)$ when $m \le N$:
\begin{equation}
  \label{eq:cn_formula}
  M_\lambda\left(N\right) :=
  \frac{1}{m!}\sum_{\mu \vdash m}
  \frac{\dim V_\mu}%
  {f_\mu(N)} \, \chi^\mu_\lambda, \quad \lambda \vdash m, 
\end{equation}
where
\begin{equation}
  \label{eq:fr_def}
  f_\lambda(N) := \frac{1}{\dim V_\lambda}\sum_{\mu \vdash m}
  g_\mu\chi^\lambda_\mu N^{\ell\left(\mu\right)}
\end{equation}
and  
\begin{equation}
  \label{eq:hook_formula}
  \dim V_\lambda =      m! \frac{\prod_{1\le i < j\le \ell\left(\lambda\right)}
    \left(\lambda_i - \lambda_j +
      j-i\right)}{\prod_{j=1}^{\ell\left(\lambda\right)}
    \left(\lambda_j + \ell\left(\lambda\right) - j\right)}
\end{equation}
 is the dimension of the irreducible
representation $V_\lambda$.

The right-hand side of~\eqref{eq:fr_def} is polynomial in $N$ of
degree $m$.  It turns out that $f_\lambda(N)$ has only integer roots,
which have a simple representation in terms of the Young tableau of
$\lambda$; they are given by all the differences $i-j$, where $i$
counts the rows of the diagram in descending order and $j$ counts its
columns from left to right. For example, if $\lambda = (5,4,4,3,1)$,
then the roots of $f_\lambda(N)$ are
\begin{equation*}
  \begin{Young}
    0 & - 1 & -2 & - 3& -4 \cr
    1 & 0 & -1 & -2  \cr
    2 & 1 & 0 & -1 \cr
    3 & 2 & 1 \cr 
    4 \cr
  \end{Young}
\end{equation*}
We shall give a proof of this property later in this section.

Since the characters of the irreducible representations of the
symmetric group are know via Frobenius's character formula,
equations~\eqref{eq:g_lambda}, \eqref{eq:cn_formula} and
\eqref{eq:fr_def} completely determine the
averages~\eqref{eq:samuel_int}.

Let $\lambda_1,\dots,\lambda_k$ be the parts of a partition $\lambda
\vdash m$. (We do not impose any ordering on the $\lambda_j$s.)  The
coefficients $M_\lambda (N)$ obey the recursion relations~\cite{Sam80}
\begin{equation}
  \label{eq:rec_rel_M}
  \begin{split}
  \delta_{\lambda_k
        1}M_{\left(\lambda_1,\dotsc,\lambda_{k-1}\right)}\left(N\right)
      & = 
  N M_{\left(\lambda_1,\dotsc,\lambda_k\right)}\left(N\right) 
  + \sum_{p + q = \lambda_k}
  M_{\left(\lambda_2,\dotsc,\lambda_k,p,q\right)}\left(N\right)\\
  & \quad  +  \sum_{j=1}^{k-1}
  \lambda_jM_{\left(\lambda_1,\dotsc,\lambda_{j-1},\lambda_j +
      \lambda_{k},
     \lambda_{j+1},\dotsc, \lambda_{k-1}\right)}\left(N\right), 
  \end{split}
\end{equation}
with initial condition $M_0(N)=1$. These equations do not depend on
permutations of the $\lambda_j$s and are a complete set, which
uniquely determines the coefficients $M_\lambda(N)$ for $\lambda \vdash
m$ in terms of those for $\lambda \vdash m-1$.

Traces of powers of matrices are homogeneous symmetric polynomials in
the eigenvalues.  Symmetric functions are intertwined with the
character theory of the symmetric group.  Therefore, it is not a
surprise that the formalism of symmetric polynomials will become
useful in computing the moments and cumulants of $X_N$. 

For every $j$ the \textit{power sum} of $m$ variables is
\begin{equation}
  \label{eq:power_part}
  p_j(x_1,\dotsc,x_m) := x_1^j + \dotsb + x_m^j.
\end{equation}
Next, we extend the definition~\eqref{eq:power_part} by taking the product
 \begin{equation}
   \label{eq:power_sym}
   p_\lambda :=p_{\lambda_1} \dotsm p_{\lambda_k} = \prod_{j=1}^mp_j^{r_j},
 \end{equation}
 where the $r_j$s are the frequencies of $\lambda$. Now suppose that
 $\ell(\lambda) \le m$. The \textit{Schur function}
 $s_\lambda(x_1,\dotsc,x_m)$ is defined by the ratio of two $m\times
 m$ determinants:
\begin{equation}
  \label{eq:schur_fun}
  s_\lambda(x_1,\dotsc,x_m):= \frac{\det\left(x_i^{\lambda_j + m
      - j}\right)}{\det\left(x_i^{m-j}\right)}.
\end{equation}
Schur functions are homogeneous symmetric polynomials of degree
$\abs{\lambda}$ and are related to the power sums by the formulae
(see~\cite{Mac95}, p.114)
\begin{equation}
  \label{eq:char_schr_p}
  s_\lambda = \frac{1}{m!}\sum_{\mu \vdash m} g_\mu \chi_\mu^\lambda \,
  p_\mu \quad \text{and} \quad  p_\mu = \sum_{\lambda \vdash m}
  \chi_\mu^\lambda s_\lambda.
\end{equation}
If $x_1,\dotsc,x_m$ are the eigenvalues of a $m \times m$ matrix $X$
we write $p_\lambda\left(X\right)=p_\lambda(x_1,\dotsc,x_m)$ and
$s_\lambda\left(X\right)=s_\lambda(x_1,\dotsc,x_m)$.

\begin{corollary}
  \label{co:proof_s_f}
  The polynomial~\eqref{eq:fr_def} can be factorised
  as\footnote{Samuel conjectured this formula in the appendix of his
    article~\cite{Sam80}, but did not give a proof.}
\begin{equation}
  \label{eq:fr_def_2}
  f_\lambda(N)  = \frac{1}{\dim V_\lambda}\sum_{\mu \vdash m}
  g_\mu\chi^\lambda_\mu N^{\ell\left(\mu\right)} =
   \prod_{(i,j)}\left(N - i + j\right) = 
   \prod_{j=1}^{\ell\left(\lambda\right)}\frac{\left(N + \lambda_j
      - j\right)!}{\left(N-j\right)!},
\end{equation}
where the pair $(i,j)$ span the row and column indices of the Young
tableau of $\lambda$.
\end{corollary}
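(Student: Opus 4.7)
The plan is to identify $f_\lambda(N)$ with a specialisation of the Schur polynomial $s_\lambda$ at $N$ ones, after which the first equality in \eqref{eq:fr_def_2} is immediate from the classical hook-content formula.

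First, I would set $x_1 = \cdots = x_N = 1$ in the first identity of \eqref{eq:char_schr_p}. Since $p_j(1,\dotsc,1) = N$ for every $j \ge 1$, the multiplicativity \eqref{eq:power_sym} gives $p_\mu(1,\dotsc,1) = N^{r_1 + \dotsb + r_m} = N^{\ell(\mu)}$ by \eqref{eq:length}. Therefore
\[
s_\lambda(\underbrace{1,\dotsc,1}_{N}) = \frac{1}{m!}\sum_{\mu \vdash m} g_\mu \chi^\lambda_\mu N^{\ell(\mu)},
\]
and the definition \eqref{eq:fr_def} of $f_\lambda(N)$ rewrites as
\[
f_\lambda(N) = \frac{m!}{\dim V_\lambda}\, s_\lambda(1,\dotsc,1).
\]

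Next, I would invoke the hook-content formula
\[
s_\lambda(1,\dotsc,1) = \prod_{(i,j)\in \lambda} \frac{N + j - i}{h(i,j)},
\]
where the product runs over the boxes of the Young tableau of $\lambda$ and $h(i,j)$ is the corresponding hook length, together with the hook length formula $\dim V_\lambda = m!/\prod_{(i,j)\in\lambda} h(i,j)$. The hook lengths cancel and one obtains
\[
f_\lambda(N) = \prod_{(i,j)\in\lambda}(N - i + j),
\]
which is the middle expression of \eqref{eq:fr_def_2} and simultaneously exhibits the integer roots of $f_\lambda$ as the differences $i - j$ promised in the discussion after \eqref{eq:hook_formula}. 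To convert this product over boxes into the rightmost expression of \eqref{eq:fr_def_2}, I would group the factors by row: for the $i$-th row, $j$ ranges from $1$ to $\lambda_i$, so
\[
\prod_{j=1}^{\lambda_i}(N - i + j) = (N - i + 1)(N - i + 2)\dotsm(N - i + \lambda_i) = \frac{(N + \lambda_i - i)!}{(N - i)!},
\]
and taking the product over $i = 1, \dotsc, \ell(\lambda)$ yields the factorial expression.

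The only non-trivial ingredient is the hook-content formula, everything else being bookkeeping. The main obstacle, should one wish to keep the exposition self-contained rather than merely citing \cite{Mac95}, is to derive that formula. The standard route is to substitute $x_k = q^{k-1}$ in the Jacobi-Trudi determinantal formula \eqref{eq:schur_fun}, evaluate both determinants using the Vandermonde identity in $q$-factorials, and take the limit $q \to 1$; this is a short but slightly technical calculation, and I would relegate it to a short lemma or a footnote pointing to \cite{Mac95}.
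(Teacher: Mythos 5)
Your proof is correct and follows essentially the same route as the paper: both begin by recognising $p_\mu(1,\dotsc,1)=N^{\ell(\mu)}$ and hence $f_\lambda(N)=m!\,s_\lambda(1,\dotsc,1)/\dim V_\lambda$ from \eqref{eq:char_schr_p}. The only (cosmetic) difference is in the final simplification: you invoke the hook-content and hook-length formulae, whereas the paper uses the Weyl dimension formula \eqref{eq:s_id} for $s_\lambda(I_N)$ together with the explicit dimension formula \eqref{eq:hook_formula}; these are equivalent classical identities and lead to the same product over boxes.
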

\begin{proof}
  Let $\lambda \vdash m$ and $N$ be a positive integer, then
  $p_\lambda\left(I_N\right)= N^{\ell(\lambda)}$.  Therefore, from
  formula~\eqref{eq:char_schr_p} for $N \in \mathbb{Z}_{+}$
\begin{equation}
  \label{eq:schr_for_good}
  f_\lambda\left(N\right) = m! \frac{s_\lambda\left(I_N\right)}%
  {\dim  V_\lambda}. 
\end{equation}
The irreducible representations of the symmetric group and of
$\mathrm{GL}\left(N,\mathbb{C}\right)$ are related by the Schur-Weyl
duality.  If $N\ge \ell(\lambda)$ the Schur functions are precisely the
irreducible characters of $\mathrm{GL}(N,\mathbb{C})$.  Thus,
equation~\eqref{eq:schr_for_good} connects the dimensions of irreducible
representations of $\mathfrak{S}_m$ and
$\mathrm{GL}\left(N,\mathbb{C}\right)$ corresponding to the same
$\lambda$. Now, we have
\begin{equation}
  \label{eq:s_id}
  s_{\lambda}\left(I_N\right) = \prod_{1 \le j < k \le
    N}\frac{\lambda_j - \lambda_k + k -j}{k-j}.
\end{equation}
Combining this formula with~\eqref{eq:schr_for_good}
and~\eqref{eq:hook_formula} gives equation~\eqref{eq:fr_def_2}.
\end{proof}


\subsection{The moments}
\label{sse:mom_calc}
Formula~\eqref{eq:iz_type_int} implies that $\psi_N$ is an entire
function.  Therefore, the series
\begin{equation}
  \label{eq:moments_expansion}
 \psi_N(\xi) = \sum_{n=0}^\infty \frac{\left(i\xi\right)^n}{n!} \mu_n
\end{equation}
converges in all the complex plane and defines all the moments of
$X_N$, which identify its probability distribution uniquely.  

Now, consider the Taylor expansion of the
integral~\eqref{eq:iz_type_int}:
\begin{equation}
  \label{eq:tayl_exp}
  \begin{split}
    \psi_N(\xi) & =
    \sum_{n=0}^\infty\frac{1}{n!}
    \left(\frac{i\xi}{2\sigma}\right)^n\sum_{m=0}^n\binom{n}{m}
    \int_{\UN}\left(\tr A_N U\right)^{n-m}\left(\tr A_N^*
      U^*\right)^{m}d\mu_{\mathrm{H}}(U).
  \end{split}
\end{equation}
Since Haar measure is left and right invariant, the integral in this
sum is zero unless $n=2m$.  Therefore,
\begin{equation}
  \label{eq:final_exp}
  \psi_N(\xi) = \sum_{m=0}^\infty  \frac{\left(-1\right)^m}{\left(m!\right)^2}
  \left(\frac{\xi}{2\sigma}\right)^{2m}I_N^m(A_N),
\end{equation}
where
\begin{equation}
  \label{eq:def_int_sam}
  I_N^{m}\left(A_N\right) := \int_{\UN}\abs{\tr A_N U}^{2m}d\mu_{\mathrm{H}}(U).
\end{equation}
Thus, the moments of $X_N$ are given by the formula
\begin{equation}
  \label{eq:moments_xN}
  \mu_{2m} = \frac{(2m-1)!!}{m!\left(2\sigma^2\right)^{m}}
  I_N^m\left(A_N\right).
\end{equation}

\begin{proposition}[Samuel 1980]
\label{le:my_mom}
Let $m\le N$ and let $\lambda =(1^{r_1}\dotsc \, m^{r_m})$ denote a partition
of $m$. We have
\begin{equation}
  \label{eq:samuel_formula}
  I_N^m(A_N) =  m! \sum_{\lambda \vdash m}
  g_\lambda M_\lambda\left(N\right)p_\lambda\left(A_NA_N^*\right).
\end{equation}
\end{proposition}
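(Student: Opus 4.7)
\bigskip

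The plan is to reduce the integral $I_N^m(A_N)$ to Samuel's matrix-element average~\eqref{eq:samuel_int} by expanding the traces index-by-index, and then to show that the residual sum over matrix indices factorises along the cycles of a single permutation, producing the power sums $p_\lambda(A_N A_N^*)$.

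First I would write out each factor in coordinates:
\begin{equation*}
\left(\tr A_N U\right)^m = \sum_{a_1,b_1,\dotsc,a_m,b_m}
\prod_{p=1}^m (A_N)_{a_p b_p}\, U_{b_p a_p},
\qquad
\left(\tr A_N^* U^*\right)^m = \sum_{c_1,d_1,\dotsc,c_m,d_m}
\prod_{p=1}^m (A_N^*)_{c_p d_p}\, \overline{U}_{c_p d_p},
\end{equation*}
taking care that $(U^*)_{ji}=\overline{U_{ij}}$ so the conjugated factor carries the same index ordering as $(A_N^*)_{c_p d_p}$. Interchanging summation and integration and invoking~\eqref{eq:samuel_int} (with $i_p=b_p$, $j_p=a_p$, $k_p=c_p$, $l_p=d_p$) gives
\begin{equation*}
I_N^m(A_N) = \sum_{\sigma,\tau\in\mathfrak{S}_m} M_{\sigma,\tau}(N)
\sum_{a,b,c,d} \prod_p (A_N)_{a_p b_p}(A_N^*)_{c_p d_p}\,
\delta_{b_p c_{\sigma p}}\delta_{a_p d_{\tau p}}.
\end{equation*}

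Next I would eliminate the starred indices using the Kronecker deltas: set $c_q=b_{\sigma^{-1}q}$ and $d_q=a_{\tau^{-1}q}$. Re-indexing (putting $q=\sigma^{-1}p$) and setting $\pi:=\tau^{-1}\sigma$, the product of $A_N^*$-entries becomes $\prod_q (A_N^*)_{b_q,\,a_{\pi q}}$. Thus the inner sum collapses to
\begin{equation*}
\sum_{a,b}\prod_{p=1}^m (A_N)_{a_p b_p}(A_N^*)_{b_p,\,a_{\pi p}}
= \sum_a \prod_{p=1}^m (A_N A_N^*)_{a_p,\,a_{\pi p}},
\end{equation*}
after performing the $b_p$-sum and using $\sum_b (A_N)_{ab}(A_N^*)_{bc}=(A_N A_N^*)_{ac}$.

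The crux of the argument is then the cycle factorisation: the product $\prod_p (A_N A_N^*)_{a_p, a_{\pi p}}$ decouples across the cycles of $\pi$, and along a cycle of length $\ell$ it telescopes into $\tr(A_N A_N^*)^\ell = p_\ell(A_N A_N^*)$. Hence for every $\pi$ of cycle type $\lambda(\pi)$ the $a$-sum equals $p_{\lambda(\pi)}(A_N A_N^*)$. Finally I would repackage the double sum over $(\sigma,\tau)$: since $M_{\sigma,\tau}(N)$ depends only on the cycle type of $\sigma\tau^{-1}$ (equivalently of $\pi$), and for each fixed $\pi$ there are exactly $m!$ pairs $(\sigma,\tau)$ with $\tau^{-1}\sigma=\pi$, grouping $\pi$'s by cycle type and using $|\{\pi:\lambda(\pi)=\lambda\}|=g_\lambda$ delivers~\eqref{eq:samuel_formula}. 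The main obstacle is purely bookkeeping—keeping the two permutations, the transposition between $A_N^*$ and $\overline{A_N}$, and the two pairings of $U$/$\overline{U}$ indices consistent; the $m=1$ check $I_N^1=N^{-1}\tr(A_N A_N^*)$ is an invaluable sanity test along the way. The constraint $m\le N$ enters only through Samuel's formula~\eqref{eq:cn_formula} for the $M_\lambda(N)$.
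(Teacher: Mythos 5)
Your proposal is correct and follows essentially the same route as the paper: expand both traces in coordinates, invoke Samuel's matrix-element average~\eqref{eq:samuel_int}, observe that the index contraction factorises along the cycles of a single permutation (your $\pi=\tau^{-1}\sigma$, the paper's $\tau^{-1}\rho\tau$ with $\rho=\sigma\tau^{-1}$) to produce $p_{\lambda}(A_NA_N^*)$, and then collect terms using that $M_{\sigma,\tau}$ is a class function together with the counts $m!$ and $g_\lambda$. The only difference is cosmetic index bookkeeping, and your $m=1$ sanity check is exactly the kind of guard the argument needs.
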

\begin{proof}
  The right-hand side of equation~\eqref{eq:samuel_int} can be
  re-written as
  \begin{equation}
    \label{eq:shifted_sum}
    \sum_{\rho,\tau \in \mathfrak{S}_m}M_\rho\left(N\right)\delta_{i_1
      k_{\rho  \tau 1}}
  \dotsm \,
    \delta_{i_m k_{\rho  \tau  m}}\delta_{j_1 l_{\tau 1}} \dotsm \,
    \delta_{j_m l_{\tau m}} ,
  \end{equation}
  where we have shifted the index in the sum by setting $\rho = \sigma
  \tau^{-1}$ and used the fact that $M_{\sigma,\tau}(N)$ depends only
  on $\sigma\tau^{-1}$.  By multiplying
  equation~\eqref{eq:shifted_sum} by $A_{N j_1i_1},\dotsc,A_{N j_m
    i_m}$ and $\overline{A}_{N l_1k_1},\dotsc,\overline{A}_{N l_m
    k_m}$ and summing over all indices, we obtain an expression of the
  form
\begin{equation}
  \label{eq:sum}
 \sum_{\rho  \in \mathfrak{S}_m}
  M_\rho\left(N\right) \sum_{\tau \in \mathfrak{S}_m}
     \sum_{\dotsc, \alpha, \beta,\gamma,\delta,\eta,\dotsc} \dotsm A_{N
    \alpha \beta}\overline{A}_{N\gamma \beta}A_{N \gamma \delta }
    \overline{A}_{N \eta \delta} \dotsm
\end{equation}
Consecutive indices in the inner sum, say $\beta$ and $\gamma$, are of
the type $k_{\rho \tau v}$ and $l_{\tau w}$ respectively, where $w =
\tau^{-1} \rho \tau v$. Thus, the collection of the addends such that
$v = \left(\tau^{-1}\rho \tau\right)^jv$ contributes with a factor
$\tr \left(A_N A_N^{*}\right)^j$.

Each letter belonging to a cycle of length $j$ is a fixed point of
order $j$ of every element in the conjugacy class of $\rho$.  The
inner sum in~\eqref{eq:sum} depends only on powers of $\tau^{-1} \rho
\tau$, and therefore is a class function and is independent of $\tau$.
Each cycle of length $j$ produces the factor $\tr
\left(A_NA_N^*\right)^j$. Therefore, we have
\begin{equation}
  \sum_{\dotsc, \alpha, \beta,\gamma,\delta,\eta,\dotsc} \dotsm A_{N
    \alpha \beta}\overline{A}_{N\gamma \beta}A_{N \gamma \delta }
    \overline{A}_{N \eta \delta} \dotsm = \prod_{j=1}^m \bigl(\tr \left(A_N
    A_N^*\right)^j\bigr)^{r_j} = p_\lambda\left(A_NA_N^*\right).
\end{equation}

Finally, formula~\eqref{eq:samuel_formula} follows from the fact that
$M_\rho\left(N\right)$ is a class function.
\end{proof}
\begin{remark}
  The integral~\eqref{eq:samuel_formula}, and thus
  by~\eqref{eq:moments_xN} the moments too, are linear combinations of
  the coefficients $M_\lambda(N)$, which have poles at the zeros of
  $f_\lambda(N)$. Such poles are related to certain singular integrals
  over $\UN$, which appear in lattice Quantum Chromodynamics and were
  first noted by De Wit and 't Hooft~\emph{\cite{DWtH77}}. They
  observed that such integrals are divergent for certain values of
  $N$.  The moments of $X_N$, however, are always finite.  The reason
  why the \textit{De Wit-'t Hooft anomalies} do not affect
  formula~\eqref{eq:samuel_formula} is because it is correct only for
  $m\le N$, and by corollary~\emph{\ref{co:proof_s_f}} the greatest
  zero of $f_\lambda(N)$ is $m-1$.
\end{remark}

As we mentioned at the beginning of this section, in order to prove
the Berry-Ess\'een inequality~\eqref{eq:bound_1} we need bounds and
asymptotic formulae for the moments and cumulants.  The evaluation of
the right-hand side of equation~\eqref{eq:samuel_formula} requires
Frobenius's character formula, which is quite cumbersome to use when
explicit formulae are needed. It turns out
that~\eqref{eq:samuel_formula} can be expressed in terms of
\text{Schur functions}, which allow it to be manipulated explicitly.
\begin{corollary}
\label{co:mom_1}
We have
\begin{equation}
  \label{eq:mom_schur}
  \mu_{2m}=\frac{(2m-1)!!}{\left(2\sigma^2\right)^m}\sum_{\lambda
    \vdash m}\left( \prod_{j=1}^{\ell\left(\lambda\right)}
    \frac{\left(N-j\right)!}{\left(N + \lambda_j  - j\right)!}\right) 
     \dim V_\lambda \, s_\lambda\left(A_NA_N^*\right).
\end{equation}
\end{corollary}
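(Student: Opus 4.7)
The proof is a direct chain of substitutions using results already stated in the paper. The plan is to start from the formula \eqref{eq:moments_xN} expressing $\mu_{2m}$ in terms of $I_N^m(A_N)$, use Samuel's identity \eqref{eq:samuel_formula} for $I_N^m(A_N)$, then convert the sum over partitions into a sum of Schur functions via the first identity in \eqref{eq:char_schr_p}, and finally replace the rational factor $1/f_\mu(N)$ by the product representation given in \eqref{eq:fr_def_2}.

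More concretely, first I would combine \eqref{eq:moments_xN} with proposition~\ref{le:my_mom} to write
\begin{equation*}
\mu_{2m} = \frac{(2m-1)!!}{(2\sigma^2)^m}\sum_{\lambda\vdash m} g_\lambda M_\lambda(N)\, p_\lambda(A_NA_N^*).
\end{equation*}
Next I would substitute the character-theoretic expression \eqref{eq:cn_formula} for $M_\lambda(N)$, obtaining
\begin{equation*}
\mu_{2m} = \frac{(2m-1)!!}{(2\sigma^2)^m}\cdot\frac{1}{m!}\sum_{\mu\vdash m}\frac{\dim V_\mu}{f_\mu(N)}\sum_{\lambda\vdash m} g_\lambda\chi^\mu_\lambda\, p_\lambda(A_NA_N^*).
\end{equation*}
The inner sum is, by the first of the two formulae in \eqref{eq:char_schr_p}, equal to $m!\, s_\mu(A_NA_N^*)$; the $m!$ cancels the one in the denominator.

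Finally, I would invoke corollary~\ref{co:proof_s_f}, which gives
\begin{equation*}
\frac{1}{f_\mu(N)} = \prod_{j=1}^{\ell(\mu)}\frac{(N-j)!}{(N+\mu_j-j)!},
\end{equation*}
and after relabelling $\mu$ back to $\lambda$ the claimed formula \eqref{eq:mom_schur} follows. There is no real obstacle here, since the necessary ingredients (Samuel's integral formula, the product formula for $f_\lambda(N)$, and the transition matrix between power sums and Schur functions) have all been established in the preceding subsection; the only care required is the bookkeeping of which sum is over $\lambda$ and which is over $\mu$ when applying \eqref{eq:char_schr_p}, and noting that the Schur-function evaluation is on the eigenvalues of the positive-semidefinite matrix $A_NA_N^*$, so that $s_\lambda(A_NA_N^*)\ge 0$, a positivity that will be useful for the subsequent moment estimates in section~\ref{sec:proof_be}.
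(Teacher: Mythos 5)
Your proposal is correct and follows essentially the same route as the paper's proof: substitute Samuel's formula and \eqref{eq:cn_formula}, use the first identity in \eqref{eq:char_schr_p} to collapse the $\lambda$-sum into a Schur function, and then apply corollary~\ref{co:proof_s_f} to rewrite $1/f_\mu(N)$. The only difference is cosmetic — you carry the prefactor $(2m-1)!!/(2\sigma^2)^m$ through from the start, while the paper first simplifies $I_N^m(A_N)$ and attaches the prefactor at the end.
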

\begin{proof}
  From formulae~\eqref{eq:cn_formula}, \eqref{eq:char_schr_p} and
  \eqref{eq:samuel_formula} we obtain
\begin{equation}
  \label{eq:pass}
  \begin{split}
    I_N^m(A_N) & := m! \sum_{\lambda \vdash m} g_\lambda M_\lambda(N)
    p_\lambda\left(A_NA_N^*\right) \\
    & = \sum_{\lambda \vdash m} \sum_{\mu
      \vdash m} \frac{\dim V_\mu}{f_\mu(N)}g_\lambda \chi^\mu_\lambda
       \,p_\lambda\left(A_NA_N^*\right)\\
    & = m! \sum_{\mu \vdash
    m}\frac{\dim V_\mu }{f_\mu(N)}
  s_\mu\left(A_NA_N^*\right).
  \end{split}
\end{equation}

Equation~\eqref{eq:mom_schur} follows from formula~\eqref{eq:fr_def_2}.
\end{proof}

We are now in a position to find asymptotic formulae for the first $N$
moments of $X_N$.  Let us denote the moments of $\mathcal{N}(0,1)$ by
$\mun_{2m}$, \ie
\begin{equation}
  \label{eq:normal_mom}
  \mun_{2m} := \left(2m - 1\right)!!.
\end{equation}
\begin{proposition}
\label{pr:bounds_mom}
We have the following bounds:
\begin{align}
  \label{eq:mom_limit_38}
   \mu_{2m} & = \mun_{2m}\Bigl(1 +  O\left((m
     k)^mN^{-1}\right)\Bigr)\\
\intertext{and}
\label{eq:mom_limit_41}
\mu_{2m} & \le \left(N^{b}k\right)^m\mun_{2m}.
\end{align}
\end{proposition}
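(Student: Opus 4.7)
The plan is to use the compact Schur-function representation of the moments from corollary~\ref{co:mom_1},
\begin{equation*}
\mu_{2m}=\frac{(2m-1)!!}{(2\sigma^2)^m}\sum_{\lambda \vdash m} \frac{\dim V_\lambda \, s_\lambda(A_NA_N^*)}{f_\lambda(N)},
\end{equation*}
and to derive the two bounds from two complementary estimates of the summand, both exploiting the explicit product form $f_\lambda(N)=\prod_{(i,j)\in\lambda}(N+j-i)$ established in corollary~\ref{co:proof_s_f}.

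For \eqref{eq:mom_limit_41}, the key observation is that $A_NA_N^*$ is positive-semidefinite with all eigenvalues bounded by $\nu_N^2\le kN^b$. Since $s_\lambda$ is a polynomial of degree $m$ with nonnegative coefficients in its arguments, a monomial-by-monomial comparison gives $s_\lambda(A_NA_N^*)\le s_\lambda((kN^b)I_N)=(kN^b)^m s_\lambda(I_N)$. Now corollary~\ref{co:proof_s_f} together with the hook length formula~\eqref{eq:hook_formula} yields the identity $s_\lambda(I_N)=f_\lambda(N)\dim V_\lambda/m!$, so the factor $f_\lambda(N)$ in the denominator cancels and the sum collapses to $(kN^b)^m\sum_{\lambda\vdash m}(\dim V_\lambda)^2/m!=(kN^b)^m$ by the classical identity $\sum_{\lambda\vdash m}(\dim V_\lambda)^2=m!$. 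Reinserting the prefactor yields \eqref{eq:mom_limit_41}.

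For \eqref{eq:mom_limit_38}, the strategy is to recognise that replacing $f_\lambda(N)$ by its leading term $N^m$ produces the \emph{exact} normal answer: indeed, setting $\mu=(1^m)$ in the second identity of \eqref{eq:char_schr_p} gives
\begin{equation*}
\sum_{\lambda\vdash m}\dim V_\lambda\, s_\lambda(A_NA_N^*)=p_{(1^m)}(A_NA_N^*)=(\tr A_NA_N^*)^m=(2N\sigma^2)^m,
\end{equation*}
and therefore $N^{-m}\sum_\lambda \dim V_\lambda s_\lambda(A_NA_N^*)=(2\sigma^2)^m$ exactly. Writing
\begin{equation*}
\sum_{\lambda\vdash m}\frac{\dim V_\lambda\, s_\lambda(A_NA_N^*)}{f_\lambda(N)}-(2\sigma^2)^m=\sum_{\lambda\vdash m}\left(\frac{N^m}{f_\lambda(N)}-1\right)\frac{\dim V_\lambda\, s_\lambda(A_NA_N^*)}{N^m},
\end{equation*}
the task reduces to controlling the deviation $N^m/f_\lambda(N)-1$. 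Since $|j-i|\le m-1$ across the Young tableau of $\lambda\vdash m$, each factor of the product $f_\lambda(N)/N^m=\prod(1+(j-i)/N)$ is $1+O(m/N)$, and the resulting bound on the deviation can then be paired with the global estimate $(kN^b)^m$ on the Schur-function sum established above to obtain the claimed error of order $(mk)^m/N$ after combining with $(2m-1)!!/(2\sigma^2)^m$.

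The main technical obstacle is obtaining a uniform control of the error over all $\lambda\vdash m$ across the full range $m\le N$ in which Samuel's formula underlying corollary~\ref{co:mom_1} applies. For $m\lesssim\sqrt{N}$ the naive Taylor estimate $|N^m/f_\lambda(N)-1|=O(m^2/N)$ is tight and gives a much stronger bound, but when $m$ approaches $N$ one must fall back on the crude product inequality $(1+m/N)^m\le e^{m^2/N}$ and absorb its growth into the Schur bound $(kN^b)^m$; it is this coupling that produces the combined factor $(mk)^m$ in the statement. Keeping track of the interplay between the $N$-dependence coming from the Haar measure (through $f_\lambda$) and the $A_N$-dependence (through $s_\lambda$) is the delicate point.
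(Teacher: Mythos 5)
Your proof of \eqref{eq:mom_limit_41} essentially reproduces the paper's argument: both pass to the Schur-function form, use positivity of the eigenvalues of $A_NA_N^*$ to get $s_\lambda(A_NA_N^*)\le (kN^b)^m s_\lambda(I_N)$, cancel $f_\lambda$ via $s_\lambda(I_N)=f_\lambda(N)\dim V_\lambda/m!$, and finish with $\sum_\lambda(\dim V_\lambda)^2=m!$. No disagreement there.

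Your route to \eqref{eq:mom_limit_38} is genuinely different from the paper's and, in spirit, cleaner. The paper works in the power-sum/character form, isolates the $\lambda=\lambda_e=(1^m)$ term (which carries the full $N^m$ weight), and uses column orthogonality of characters to control the rest. You instead observe that $\sum_{\lambda\vdash m}\dim V_\lambda\,s_\lambda(A_NA_N^*)=p_{(1^m)}(A_NA_N^*)=(2N\sigma^2)^m$ \emph{exactly}, so that replacing $f_\lambda(N)$ by $N^m$ reproduces $\mun_{2m}$ on the nose. Combined with the nonnegativity of each $\dim V_\lambda\,s_\lambda(A_NA_N^*)$, this decouples the $A_N$-dependence from the Haar-measure dependence and immediately yields the sharp bound
\begin{equation*}
\abs{\mu_{2m}/\mun_{2m}-1}\;\le\;\max_{\lambda\vdash m}\abs{\tfrac{N^m}{f_\lambda(N)}-1},
\end{equation*}
a quantity depending only on $m$ and $N$. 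This avoids the paper's double sum and the orthogonality-of-characters step entirely, which is a real simplification.

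Where the argument is not finished is precisely the step you flag at the end: you never actually establish $\max_{\lambda\vdash m}\abs{N^m/f_\lambda(N)-1}=O\left((mk)^mN^{-1}\right)$ uniformly for $m\le N$, and the remedy you gesture at (``absorb into the Schur bound $(kN^b)^m$'') is the wrong move — once you have the exact identity, the Schur bound is not what you should be combining with, since the sum $\sum_\lambda\dim V_\lambda\,s_\lambda(A_NA_N^*)/N^m$ is already computed exactly and pulling in a $(kN^b)^m$ overcount only loosens things. What remains is a self-contained estimate: from the product form of corollary~\ref{co:proof_s_f}, the extremal cases are $\lambda=(1^m)$ and $\lambda=(m)$, for which $N^m/f_\lambda(N)=\prod_{j=0}^{m-1}(1\mp j/N)^{-1}$; a Stirling-type bound shows this is $1+O(m^2/N)$ for $m=O(\sqrt{N})$ and grows like $e^{m}/\sqrt{m}$ as $m\to N$, and one then checks that this is dominated by $C(mk)^m/N$ on the whole range $1\le m\le N$ using only that $k$ is bounded away from zero (which follows from the normalization $\tr A_NA_N^*=2N\sigma^2$). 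That verification is genuinely the crux, and writing ``it is this coupling that produces the combined factor $(mk)^m$'' leaves it open; you should make the comparison explicit, because as it stands the proof has the right architecture but no completed error estimate.
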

\begin{proof}
  From equations~\eqref{eq:cn_formula} and~\eqref{eq:samuel_formula}
  we have
\begin{equation}
  \label{eq:mom_bound}
  \mu_{2m} = \frac{\mun}{\left(2\sigma^2\right)^mm!}\sum_{\lambda
    \vdash m} \sum_{\mu \vdash m}\frac{\dim
    V_\mu}{f_\mu(N)} g_\lambda \chi^\mu_\lambda
  p_\lambda\left(A_NA_N^*\right). 
\end{equation}

The first step consists in finding bounds for
$p_\lambda\left(A_NA_N^*\right)$ and $f_\mu(N)$.  Remember that by
definition~\eqref{eq:bound}, the greatest singular value of $A_N$ is
bounded by $\sqrt{kN^b}$, where $k = O(1)$ and $0\le b < 1$.  We have
  \begin{equation}
    \label{eq:lead_order_asym}
    \tr \left(A_N A_N^*\right)^j = \left(2 \sigma^2\right)^j 
    N^{a + bj}c_j\left(A_N\right),  
  \end{equation}
  where $c_j\left(A_N\right) = O\left(k^j\right)$ and $a = 1-b$.
  Note that by definition $\sigma$ is independent of $A_N$, therefore
  $c_1\left(A_N\right)=1$. It follows that
  \begin{equation}
    \label{eq:asym_prod_traces}
   p_\lambda \left(A_N A_N^*\right)= \prod_{j=1}^m\bigl(\tr \left(A_N
    A_N^*\right)^j\bigr)^{r_j} = \left(2\sigma^2\right)^m
     N^{a \ell\left(\lambda\right) + bm}c_{\lambda}\left(A_N\right),
  \end{equation}
  where
  \begin{equation}
    \label{eq:con_part}
    c_{\lambda}\left(A_N\right) = \prod_{j=1}^m
    c_j\left(A_N\right)^{r_j} =  O\left(k^m\right).
  \end{equation}
  Denote by $\lambda_e=(1^m)$ the cycle-type of the identity in
  $\mathfrak{S}_m$. Combining equations~\eqref{eq:asym_prod_traces}
  and~\eqref{eq:con_part} we obtain.
\begin{equation}
  \label{eq:asym_prod_traces_2}
  p_\lambda\left(A_NA_N^*\right)  
  = \left(2 \sigma^2\right)^m 
  N^m \times 
   \begin{cases}
    1 & \text{if $\lambda = \lambda_e$,}\\
    O\bigl(k^m N^{-a\left(m-\ell\left(\lambda\right)\right)}\bigr)
     & \text{if $\lambda \neq \lambda_e$.}
   \end{cases}
\end{equation}

Now consider $f_\mu(N)$.  We can easily see that for $N \ge m + 1$
\begin{equation}
  \label{eq:bounds_fmu}
  f_{(1^m)}(N) = \prod_{j=1}^m (N - j) \le f_\mu (N) \le f_{(m^1)}(N)
  = \prod_{j=1}^m\left(N+j\right), \quad \mu \vdash m,
\end{equation}
where $(m^1)$ and $(1^m)$ correspond to the trivial and alternating
representations respectively, which are both one-dimensional.  We can
re-write the inequalities~\eqref{eq:bounds_fmu} in the following way
\begin{gather}
  \label{eq:fi}
  N^m\prod_{j=1}^m \left(1 - \frac{j}{N}\right)\le f_\mu(N) \le
  N^m\prod_{j=1}^m \left(1 +  \frac{j}{N}\right) \nonumber \\
  \frac{N^m m!}{\left(m + 1\right)^m} \le f_\mu\left(N\right) \le 
  \frac{N^m \left(2m + 1\right) \dotsb \left(m + 2\right)}{\left(m +
      1\right)^m}.  
\end{gather}

The sum~\eqref{eq:mom_bound} can be split as follows:
\begin{equation}
  \label{eq:split_1}
  \begin{split}
    \mu_{2m} & = \frac{\mun N^m}{m!} \sum_{\mu \vdash m} \frac{\left(\dim
      V_\lambda \right)^2}{f_\mu(N)} \\
   & \quad + \frac{\mun}{\left(2\sigma^2\right)^mm!}\sum_{\substack{\lambda
    \vdash m \\ \lambda \neq \lambda_e }} \sum_{\mu \vdash m}\frac{\dim
    V_\mu}{f_\mu(N)} g_\lambda \chi^\mu_\lambda
  p_\lambda\left(A_NA_N^*\right).  
  \end{split}
\end{equation}
The first sum on the right-hand side can be estimated using the
bounds~\eqref{eq:bounds_fmu} 
\begin{equation}
  \label{eq:first_sum}
   \begin{split}
  \frac{\mun N^m}{m!} \sum_{\mu \vdash m} \frac{\left(\dim
      V_\lambda \right)^2}{f_\mu(N)} & = \frac{\mun}{m!}\sum_{\mu \vdash
    m} \left(\dim V_\lambda\right)^2\bigl(1 +
  O\left(e^mN^{-1}\right)\bigr) \\
  & = \mun\bigl(1 + O\left(e^m N^{-1}\right)\bigr).
 \end{split}
\end{equation}
Using the same ideas, we write
\begin{equation}
  \label{eq:se_par}
\frac{\mun }{\left(2\sigma^2\right)^m m!}  
  \sum_{\mu \vdash m} \chi_\lambda^\mu \frac{\dim V_\lambda}{f_\mu(N)}
  = \frac{\mun }{\left(2\sigma^2\right)^m m! N^m}
  \sum_{\mu \vdash m}\chi_\lambda^\mu  \dim V_\lambda
  \bigl(1 + O\left(e^m N^{-1}\right)\bigr ). 
\end{equation}
Irreducible representations of finite groups can always be chosen to
be unitary.  Therefore, we have that $\abs{\chi^\mu_\lambda} \le
\chi^\mu_{\lambda_e} = \dim V_\mu$.  Thus, using the orthogonality of
the characters, the sum~\eqref{eq:se_par} becomes
\begin{equation}
  \label{eq:par_sum}
  \frac{\mun R(m,N)}{\left(2\sigma^2\right)^m N^m},
\end{equation}
where $ R(m,N)= O\left(e^m N^{-1}\right)$.
Finally, inserting equation~\eqref{eq:par_sum}
  into~\eqref{eq:split_1} and using~\eqref{eq:asym_prod_traces_2} we
  obtain formula~\eqref{eq:mom_limit_38}

  In order to prove equation~\eqref{eq:mom_limit_41}, recall that from
  formulae~\eqref{eq:mom_schur} and~\eqref{eq:fr_def_2} we can write
\begin{equation}
  \label{eq:mom_again}
  \mu_{2m} = \frac{\mun_{2m}}{\left(2\sigma^2\right)^m
    m!}\sum_{\lambda \vdash m} \left(\dim
    V_\lambda\right)^2\frac{s_\lambda\left(A_N
      A_N^*\right)}{s_\lambda\left(I_N\right)}.
\end{equation}
The greatest singular value of $A_N$ is bounded by $2\sigma^2
\sqrt{kN^b}$.  Since the eigenvalues of $A_N A_N^*$ are non-negative,
$s_\lambda\left(A_N A_N^*\right)$ is positive and
\begin{equation}
  \label{eq:bound_sh}
  s_\lambda\left(A_n A_N^*\right) \le \left(kN^b\right)^m 
  s_\lambda\left(I_N\right).
\end{equation}
Then, equation~\eqref{eq:mom_limit_41} follows from the orthogonality
of the characters.
\end{proof}

An immediate corollary is the convergence in distribution of $X_N$ to
$\mathcal{N}(0,1)$ (D'Aristotile \textit{et al.}~\cite{ADN03}). For
fixed $m$ formula~\eqref{eq:mom_limit_38} gives
\begin{equation}
  \label{eq:mom_lim}
  \lim_{N \to \infty} \mu_{2m} = \mun_{2m}.
\end{equation}

The bound~\eqref{eq:mom_limit_41} plays an important role in the proof
of the Berry-Ess\'een inequality~\eqref{eq:bound_1}.  For $N \gg
m$~\eqref{eq:mom_limit_38} is a better bound; however, it becomes much
worse when $m \approx N$. This is an important regime.  As we shall
see, when $b=1$ the right-hand side of~\eqref{eq:mom_limit_41} is too
large to allow~\eqref{eq:bound_1} to be valid for a range of $\xi$
sufficiently large for our purposes.  We believe that the correct
bound for $\mu_{2m}$ is much smaller than both~\eqref{eq:mom_limit_38}
and~\eqref{eq:mom_limit_41}.  The reason is that the sum
\begin{equation}
  \label{eq:sum_par_last}
   \sum_{\lambda \vdash m} \sum_{\mu \vdash m} \frac{\dim
     V_\mu}{f_\mu(N)}g_\lambda
     \chi_\lambda^\mu p_\lambda\left(A_NA_N^*\right)
\end{equation}
is characterised by a sequence of cancellations.

\begin{remark}
  It is worth noting that, since the integral on the right-hand side
  of equation~\eqref{eq:tayl_exp} is zero unless $n=2m$, the proof of
  proposition~{\rm \ref{pr:bounds_mom}}.  also demonstrates that the random
  variable $Z_N= X_N + i Y_N = \tr A_N U$ converges in distribution to
  a \textit{complex normal random variable} $Z$, whose centred
  moments\footnote{Here we have adopted the convention that the real
    and imaginary parts of a standard complex normal random variable
    have variance $\frac12$.  Therefore, if we had studied $\tr A_N
    U$, instead of its real and imaginary parts separately, we should
    have set $\sigma^2 = \tr A_N A_N^*/N$.  This explains the
    discrepancy of a factor $2^m$ in the notation used in
    equations~\eqref{eq:sam_mom} and~\eqref{eq:compl_mom}.} are
\begin{equation}
  \label{eq:compl_mom}
  \mathbb{E}\left\{Z^m\overline{Z}^n\right\} = \delta_{mn}\sigma^{2m}
  m!.
\end{equation}
\end{remark}

\begin{remark}
  \label{re:id_calc}
  When $A_N=I$ the first $N$ moment are exactly gaussian independently
  of $N$.  This is a particular case of a more general result proved
  by Diaconis and Shahshahani~{\rm\cite{DS94}} and can be easily
  recovered in our formalism. We have
   \begin{equation}
    \label{eq:sam_mom}
    \begin{split}
      I_N^m(I) & = m!\left(2\sigma^2\right)^m
       \sum_{\lambda \vdash m}
      g_\lambda M_\lambda\left(N\right) N^{\ell\left(\lambda\right)}\\
      & = \left(2\sigma^2\right)^m
       \sum_{\lambda \vdash m} \sum_{\mu \vdash m}
      \frac{\chi^\mu_{\lambda_e}\chi^{\mu}_{\lambda_e}}%
      {f_\mu\left(N\right)}\frac{g_\lambda\chi^\mu_\lambda}%
      {\chi^\mu_{\lambda_e}}N^{\ell\left(\lambda\right)}\\
      & = \left(2\sigma^2\right)^m\sum_{\mu \vdash m}
      \chi^\mu_{\lambda_e}\chi^\mu_{\lambda_e} =
      m!\left(2\sigma^2\right)^m, \quad m \le N.
  \end{split}  
  \end{equation}
\end{remark}

\subsection{The cumulants}
\label{sec:cumulants}

The characteristic function $\psi_N(\xi)$ is entire and by definition
$\psi_N(0)=1$.  Therefore, the Taylor series
\begin{equation}
  \label{eq:cumulant_def}
  \log \psi_N(\xi) =  \sum_{n=1}^{\infty} \frac{(i\xi)^n}{n!}\kappa_n
\end{equation}
converges in a neighbourhood of the origin. The coefficients
$\kappa_n$ are by definition the \textit{cumulants} of $X_N$ and
determine uniquely its probability distribution.  They are related to
the moments by the recurrence relation
\begin{equation}
  \label{eq:rec_rel}
  \kappa_n = \mu_n - 
   \sum_{k=1}^{n-1}\binom{n-1}{k-1}\kappa_k\mu_{n-k}.
\end{equation}

The choice of whether to use the moments or the cumulants depends on
the information that one is seeking to extract. It turns out that in
the proof of the Berry-Ess\'een inequality~\eqref{eq:bound_1} we shall
need the asymptotic behaviour of both. The purpose of this section is
to derive a bound for $\kappa_{2m}$ for $m \le N$.

Let $\lambda \vdash n$ and define
\begin{equation}
  \label{eq:partition_cumulant}
  \kappa_\lambda := \kappa_1 \dotsm \kappa_{\lambda_l} = 
  \prod_{j=1}^n\kappa_j^{r_j},
\end{equation}
where the $r_j$s are the frequencies of the partition $\lambda$.  There
exists an elegant formula (see, \textit{e.g.}~\cite{Mac95}, pp. 30--31)
that expresses the moments as polynomials in the cumulants:
\begin{equation}
  \label{eq:mom_cum}
   \mu_n = \sum_{\lambda \vdash n}c_\lambda \kappa_\lambda,
\end{equation}
where
\begin{equation}
  \label{eq:set_partition}
  c_\lambda := \frac{n!}{\left(1!\right)^{r_1} r_1!
    \dotsm \left(m!\right)^{r_m}r_m!}
\end{equation}
is the number of decompositions of a set of $n$ elements into disjoint
subsets containing $\lambda_1,\dotsc,\lambda_n$ elements.  Similarly,
equation~\eqref{eq:rec_rel} can be solved for the cumulants:
\begin{equation}
  \label{eq:late}
  \kappa_n = \sum_{\lambda \vdash
    n}\left(-1\right)^{\ell\left(\lambda\right)-1}\left(\ell\left(\lambda\right)
    -1\right)! c_\lambda \mu_\lambda, 
\end{equation}
where
\begin{equation}
  \label{eq:partition_moments}
  \mu_\lambda:= \mu_1 \cdots \mu_{\lambda_l} =   
  \prod_{j=1}^n\mu_j^{r_j}.
\end{equation}
 
All the odd moments of $X_N$ are zero, therefore all the odd
cumulants are zero too.  Thus, ~\eqref{eq:mom_cum} can be
rewritten as
\begin{equation}
  \label{eq:even_cumulants}
  \mu_{2m} = \left(2m - 1\right)!! \sum_{\lambda \vdash m}
  \frac{c_\lambda \kappa_{2\lambda}. }{\bigl(1!!\bigr)^{r_j}\dotsm 
    \bigl(\left(2m-1\right)!!\bigr)^{r_m}},
\end{equation}
where we have used the notation $2\lambda
=\left(2\lambda_1,\dotsc,2\lambda_l\right)$.

The $2m$-th moment of $X_N$ is a polynomial of degree $m$ in the
traces $\tr \left(A_N A_N^*\right)^j$; the recursion
relations~\eqref{eq:rec_rel} imply that the $2m$-th cumulant is also a
polynomial of degree $m$ in the same variables.  Therefore, we can
write
\begin{equation}
  \label{eq:cum_poly}
  \kappa_{2m}= \frac{\left(2m -1 \right)!!}{\left(2\sigma^2\right)^m} 
  \sum_{\lambda \vdash m} g_\lambda K_\lambda (N)
p_\lambda \left(A_N A_N^*\right).
\end{equation}
If we know the asymptotic behaviour of $K_\lambda(N)$, then we can
determine that of the cumulants.  In turn, the coefficients
$K_\lambda(N)$ are related to those of $M_\lambda(N)$.

The union $\lambda \cup \mu$ is defined as the partition whose parts
are those of $\lambda$ and $\mu$ arranged in descending order.
Cumulants have a combinatorial interpretation in term of partitions
of sets; let us define
\begin{equation}
  \label{eq:cum_M}
  M_\lambda (N) := \sum_{\Lambda}a_{\Lambda}
   \prod_{\mu \in \Lambda} K'_\mu(N),
\end{equation}
where $\Lambda$ runs through all possible distinct decompositions of
$\lambda$ as a union of sub-partitions.  The meaning of $a_{\Lambda}$
is better explained with an example.  Consider the partition $\lambda
= (\lambda_1,\lambda_2,\lambda_3)$ and write
\begin{equation}
  \label{eq:examp}
  \begin{split}
  M_{\left(\lambda_1,\lambda_2,\lambda_3\right)} & =
  K'_{\left(\lambda_1,\lambda_2,\lambda_3\right)} + 
  K'_{\left(\lambda_1\right)} K'_{\left(\lambda_2,\lambda_3\right)}  +
  K'_{\left(\lambda_2\right)} K'_{\left(\lambda_1,\lambda_3\right)} \\
 & \quad + K'_{\left(\lambda_3\right)} K'_{\left(\lambda_1,\lambda_2\right)} +
  K'_{\left(\lambda_1\right)}  K'_{\left(\lambda_2\right)}  
  K'_{\left(\lambda_3\right)}.
  \end{split}  
\end{equation}
If the $\lambda_1$, $\lambda_2$ and $\lambda_3$ are all
different, then each summand in~\eqref{eq:examp} is distinct, but if
some parts of $\lambda$ are repeated, this is not the case.  For
example, let $\lambda = (3,1,1)$, then $\left(\lambda_2\right) \cup
\left(\lambda_1,\lambda_3\right)$ and $\left(\lambda_3\right) \cup
\left(\lambda_1,\lambda_2\right)$ are the same decomposition of
$\lambda$ and
\begin{equation}
  \label{eq:ex_repetition}
  K'_{\left(\lambda_2\right)}K'_{\left(\lambda_1,\lambda_3\right)} =
  K'_{\left(\lambda_3\right)} K'_{\left(\lambda_1,\lambda_2\right)}.
\end{equation}
The coefficient $a_\Lambda$ is precisely such a
multiplicity. Computing it is an exercise in elementary combinatorics.

Let $\lambda \vdash m$ and define $\pi_\mu$ to be the number of times
that a partition $\mu$ appears in the decomposition $\lambda =
\bigcup_{\mu \in \Lambda} \mu$.  Furthermore, let $r_j$ and $s_j^\mu$
denote the frequencies of $j$ in $\lambda$ and $\mu$ respectively. We
have
\begin{equation}
  \label{eq:blambda}
  a_\Lambda = \prod_{j=1}^m r_j! \prod_{\mu \in \Lambda}
  \frac{1}{s^{\mu}_j!\pi_\mu!}.
\end{equation}

\begin{proposition}
  \label{le:combinatorics}
 The coefficients $K_\lambda(N)$ and $K^\prime_\lambda(N)$, 
  defined in equations~\eqref{eq:cum_poly}
  and~\eqref{eq:cum_M} respectively, coincide.
\end{proposition}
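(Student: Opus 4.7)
The plan is to show that the coefficients $K_\lambda(N)$ defined via the cumulant expansion~\eqref{eq:cum_poly} themselves satisfy the combinatorial relation~\eqref{eq:cum_M}, and then conclude by uniqueness. For the uniqueness: in the trivial decomposition $\Lambda=\{\lambda\}$ one has $a_\Lambda=1$, while every non-trivial $\Lambda$ involves only sub-partitions $\mu$ with $\abs{\mu}<\abs{\lambda}$, so~\eqref{eq:cum_M} can be solved recursively for $K'_\lambda(N)$ in terms of the $M_\nu(N)$ with $\abs{\nu}\le\abs{\lambda}$. Hence $K'_\lambda$ is uniquely determined by its defining relation, and the equality $K_\lambda=K'_\lambda$ reduces to showing that $K_\lambda$ satisfies the same recursion.

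The starting point is the standard exponential moment--cumulant identity. Because all odd moments of $X_N$ vanish, the substitution $s=\xi^2/2$ in $\sum_n\mu_n(i\xi)^n/n!=\exp\bigl(\sum_n\kappa_n(i\xi)^n/n!\bigr)$ yields
\[
\sum_{m\ge 0}\tilde\mu_m\,\frac{s^m}{m!}=\exp\!\left(\sum_{m\ge 1}\tilde\kappa_m\,\frac{s^m}{m!}\right),
\]
where $\tilde\mu_m:=\mu_{2m}/(2m-1)!!$ and $\tilde\kappa_m:=\kappa_{2m}/(2m-1)!!$. Expanding the exponential gives the Bell-polynomial identity
\[
\tilde\mu_m=\sum_{\lambda\vdash m}\frac{m!}{\prod_j(j!)^{r_j}\,r_j!}\prod_i\tilde\kappa_{\lambda_i}.
\]
Next, into each factor $\tilde\kappa_{\lambda_i}$ I substitute~\eqref{eq:cum_poly} in the rescaled form $\tilde\kappa_\ell=(2\sigma^2)^{-\ell}\sum_{\eta\vdash\ell}g_\eta K_\eta(N)\,p_\eta(A_NA_N^*)$, producing a multiple sum over $\lambda\vdash m$ and tuples $(\eta^{(i)})$ with $\eta^{(i)}\vdash\lambda_i$. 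Using $\prod_ip_{\eta^{(i)}}(A_NA_N^*)=p_\nu(A_NA_N^*)$ whenever $\nu:=\bigcup_i\eta^{(i)}$, I group terms by the unordered multiset decomposition $\Lambda=\{\eta^{(1)},\dotsc,\eta^{(\ell(\lambda))}\}$ of $\nu$.

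Comparing with the Samuel expansion~\eqref{eq:samuel_formula} in the equivalent form $\tilde\mu_m=(2\sigma^2)^{-m}\sum_{\nu\vdash m}g_\nu M_\nu(N)\,p_\nu(A_NA_N^*)$, and using that the power sums $p_\nu(A_NA_N^*)$ may be treated as algebraically independent indeterminates (the eigenvalues of $A_NA_N^*$ being free to vary), I equate coefficients of $p_\nu$ to obtain
\[
g_\nu M_\nu(N)=\sum_\Lambda b_\Lambda\prod_{\mu\in\Lambda}g_\mu K_\mu(N),
\]
in which $b_\Lambda$ collects the multinomial factors from the expansion.

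The final, and main, obstacle is the purely combinatorial identity $b_\Lambda=g_\nu\,a_\Lambda/\prod_{\mu\in\Lambda}g_\mu$. This amounts to counting, for a fixed unordered decomposition $\Lambda=\{\mu_1,\dotsc,\mu_k\}$ of $\nu$, the number of ordered realisations, i.e.\ pairs $(\lambda,(\eta^{(i)}))$ with $\lambda\vdash m$, $\eta^{(i)}\vdash\lambda_i$, and multiset of $\eta$'s equal to $\Lambda$. The bookkeeping must account for repeated parts of $\lambda$ (contributing $r_j(\lambda)!$), repeated parts inside each $\eta^{(i)}$ (contributing $s_j^\mu!$), and repeated sub-partitions $\mu$ in $\Lambda$ (contributing $\pi_\mu!$); combined with the explicit form $g_\lambda=m!/(\prod_j j^{r_j}\,r_j!)$ of the conjugacy-class sizes and its analogue for each $g_\mu$, this reproduces~\eqref{eq:blambda} exactly. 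Nothing here is deep, only a delicate multinomial count; once it is in hand, the uniqueness argument above yields $K_\lambda(N)=K'_\lambda(N)$.
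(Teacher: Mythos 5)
Your plan follows essentially the same route as the paper's proof: express $\mu_{2m}$ through the (even) moment--cumulant relation, substitute the expansion of $\kappa_{2\ell}$ in the power sums $p_\eta(A_NA_N^*)$, regroup by the unordered decomposition $\Lambda$ of $\nu=\bigcup\eta^{(i)}$, and compare coefficients of $p_\nu$ against Samuel's formula. The preliminary uniqueness observation --- that~\eqref{eq:cum_M} can be solved recursively for $K'_\lambda$ because the trivial decomposition $\Lambda=\{\lambda\}$ enters with $a_\Lambda=1$ while all other terms involve strictly smaller sub-partitions --- is a tidy way of making explicit what the paper leaves implicit when it equates~\eqref{eq:lin_comb_1} with~\eqref{eq:lin_comb_2}, and the appeal to algebraic independence of the power sums (legitimate here since $m\le N$) is also sound.

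The issue is that the step you dismiss as ``only a delicate multinomial count'' is the entire substance of the proposition, and is exactly what the paper's proof spends its space establishing, in the chain from~\eqref{eq:first_passage} to~\eqref{eq:final_first_2}: one starts from $c_\lambda=m!/\prod_j(j!)^{r_j}r_j!$, uses the multinomial expansion of $\prod_j\bigl(\sum_{\mu\vdash j}g_\mu K_\mu x_\mu\bigr)^{r_j}$ to produce the $\pi_\mu!$ denominators, and then uses $g_\mu=j!/\prod_k k^{s^\mu_k}s^\mu_k!$ together with the frequency relation $r_k=\sum_{\mu\in\Lambda}s^\mu_k$ of~\eqref{eq:rep_rel} to cancel the $(j!)^{r_j}$ against the accumulated $\prod_\mu g_\mu$, leaving precisely the weight $\prod_j r_j!\prod_{\mu\in\Lambda}(s^\mu_j!\pi_\mu!)^{-1}$ of~\eqref{eq:blambda}. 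You do correctly identify the three sources of multiplicity ($r_j!$, $s^\mu_j!$, $\pi_\mu!$) and state the target identity $b_\Lambda=g_\nu a_\Lambda/\prod_{\mu\in\Lambda}g_\mu$ precisely, so the plan would close; but as written, the proposal asserts the identity rather than derives it, and without that derivation the proof is not complete.
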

\begin{proof}
  For the sake of simplicity, let us set $x_j := \tr
  \left(A_NA_N^*\right)^j$ and $x_\lambda := x_1^{r_1} \dotsm
  x_m^{r_m}$.  By inserting equation~\eqref{eq:cum_poly} into the
  right-hand side of~\eqref{eq:even_cumulants} we see that 
  \begin{equation}
    \label{eq:lin_comb_1}
    \mu_{2m}=  \sum_{\lambda \vdash m} \sum_{\Lambda} b_\Lambda 
    \prod_{\mu \in \Lambda} K_{\mu}^{\pi_{\mu}}x_\mu^{\pi_\mu}.
  \end{equation}
  Similarly, by substituting~\eqref{eq:cum_M}
  into~\eqref{eq:samuel_formula} we obtain
  \begin{equation}
    \label{eq:lin_comb_2}
    \mu_{2m}=  \sum_{\lambda \vdash m} 
    \sum_{\Lambda} b^{\prime}_\Lambda \prod_{\mu \in \Lambda}
    K_{\mu}^{\pi_{\mu}}x_\mu^{\pi_\mu}.
  \end{equation}
  Since the right-hand sides of equations~\eqref{eq:lin_comb_1}
  and~\eqref{eq:lin_comb_2} identically equal for arbitrary $\mu$, we
  need to show that $b_\Lambda = b'_\Lambda$ 

  Equations~\eqref{eq:even_cumulants} and~\eqref{eq:cum_poly} give
  \begin{equation}
    \label{eq:first_passage}
    \begin{split}
      \mu_{2m} & = \frac{\left(2m -1\right)!!}{\left(2\sigma^2\right)^m}
     \sum_{\lambda \vdash m}c_\lambda \prod_{j=1}^m\left(\sum_{\mu \vdash j} g_{\mu}
        K_{\mu}(N)x_{\mu}\right)^{r_j} \\
      & = \frac{\left(2m - 1\right)!!}{\left(2\sigma^2\right)^m}
       \sum_{\lambda \vdash m} c_\lambda
      \prod_{j=1}^m r_j!
      \sum_{\substack{\pi_\mu \\  \sum\limits_{\mu} \pi_\mu = r_j }}
      \prod_{\mu \vdash j} \frac{g_{\mu}^{\pi_{\mu}}}{\pi_{\mu}!}
      K^{\pi_{\mu}}_{\mu}(N)x^{\pi_{\mu}}_{\mu}\\
    & = \frac{\left(2m - 1\right)!!\, m!}{\left(2\sigma^2\right)^m}
     \sum_{\lambda \vdash m} \frac{1}%
       {\left(1!\right)^{r_1}\dotsm\left(j!\right)^{r_j}}
      \sum_{\Lambda} \prod_{\mu \in \Lambda}
  \frac{g_\mu^{\pi_\mu}}{\pi_\mu!}  
  K^{\pi_{\mu}}_{\mu}(N)x^{\pi_{\mu}}_{\mu}.
\end{split}
\end{equation}
In the last passage $\pi_\mu$ assumes the same meaning as in
equation~\eqref{eq:blambda}, \ie it is the number of repetitions of a
partition $\mu$ in the union $\lambda = \bigcup_{\mu \in \Lambda}
\mu$. Now, let $\lambda =(1^{r_1}\dotsc \, m^{r_m})$ and 
$\mu = (1^{s^\mu_1} \dotsc \, j^{s^\mu_j})$ with $\lambda \vdash m$ and
$\mu \vdash j$.  The frequencies of $\lambda$ and $\mu$ are related by 
\begin{equation}
  \label{eq:rep_rel}
  r_k = \sum_{\mu \in \Lambda} s^{\mu}_k.
\end{equation}
Furthermore, by definition we have
\begin{equation}
  \label{eq:g_again}
  g_\mu = \frac{j!}{1^{s^\mu_1}s^\mu_1!\dotsm j^{s^\mu_j}s^\mu_j!}.
\end{equation}
Thus, combining equations~\eqref{eq:first_passage}, \eqref{eq:rep_rel}
and~\eqref{eq:g_again} we arrive at
\begin{equation}
  \label{eq:final_first_1}
  \mu_{2m} = \frac{\left(2m\right)!}{\left(2\sigma\right)^{2m}}
  \sum_{\lambda \vdash m} \frac{1}{1^{r_1}\dotsm m ^{r_m}}
  \sum_{\Lambda} \prod_{j=1}^m \prod_{\mu \in \Lambda} 
  \frac{1}{s_j^\mu!\pi_\mu!}\prod_{\mu \in \Lambda}
    K^{\pi_{\mu}}_{\mu}(N)x^{\pi_{\mu}}_{\mu}.
\end{equation}
Finally, equations~\eqref{eq:moments_xN},
\eqref{eq:samuel_formula}, \eqref{eq:cum_M} and~\eqref{eq:blambda}
give  
\begin{equation}
  \label{eq:final_first_2}
  \mu_{2m} = \frac{\left(2m\right)!}{\left(2\sigma\right)^{2m}}
  \sum_{\lambda \vdash m} \frac{1}{1^{r_1}\dotsm m ^{r_m}}
  \sum_{\Lambda} \prod_{j=1}^m \prod_{\mu \in \Lambda} 
  \frac{1}{s_j^\mu!\pi_\mu!}\prod_{\mu \in \Lambda}
    K^{\prime \pi_{\mu}}_{\mu}(N)x^{\pi_{\mu}}_{\mu}.
\end{equation}
\end{proof}

Brouwer and Beenaker~\cite{BB96} computed the leading order
asymptotics as $N \to \infty$ of $K_\lambda(N)$. By inserting the
right-hand side of~\eqref{eq:cum_M} in equations~\eqref{eq:rec_rel_M}
we derive the recursion relations
\begin{multline}
  \label{eq:rec_rel_K}
  N K_{\left(\lambda_1,\dotsc,\lambda_k\right)}\left(N\right) 
  + \sum_{p + q = \lambda_k}
  K_{\left(\lambda_2,\dotsc,\lambda_k,p,q\right)}\left(N\right) +
  \sum_{j=1}^{k-1}
  \lambda_jK_{\left(\lambda_1,\dotsc,\lambda_{j-1},\lambda_j +
      \lambda_{k},
     \lambda_{j+1},\dotsc, \lambda_{k-1}\right)}\left(N\right) \\
      + \sum_{p+q=\lambda_k} \sum_{j=1}^{k-1}\frac{1}{j!\left(k - j -
          1\right)!} \sum_{\sigma \in
        \mathfrak{S}_{k-1}}K_{\left(\sigma 1,\dotsc,\sigma
          j,p\right)}(N)K_{\left(\sigma (j+1),\dotsc,\sigma (k-1),q\right)}(N)=0
\end{multline}
with $K_0\left(N\right)=1$.
The solution to these equations to leading order is
\begin{equation}
  \label{eq:bb_exp}
  \begin{split}
  K_\lambda(N)& =\left(-1\right)^{m
    +\ell\left(\lambda\right)} 2^{\ell\left(\lambda\right)} N^{-2m -
    \ell\left(\lambda\right) +2} \\
    & \quad \times  \frac{\left(2m +
      \ell\left(\lambda\right) -
      3\right)!}{\left(2m\right)!}\prod_{j=1}^m \frac{\bigl(\left(2j
      -1\right)!\bigr)^{r_j}}{\bigl(\left(j - 1\right)!\bigr)^{2r_j}}
     + O\left(N^{-2m - \ell\left(\lambda\right)}\right).
   \end{split}
\end{equation}

We are now in a position to state the main result of this section.
\begin{theorem}
  \label{th:as_cum}
  We have
  \begin{equation}
    \label{eq:th_cum_asy}
    \kappa_{2m} = O\left((2m)! N^{-(2-b)(m -1) }\right).
  \end{equation}
\end{theorem}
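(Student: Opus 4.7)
The plan is to substitute the Brouwer--Beenakker asymptotic~\eqref{eq:bb_exp} for $K_\lambda(N)$ and the bound~\eqref{eq:asym_prod_traces_2} for $p_\lambda(A_NA_N^*)$ into the polynomial representation~\eqref{eq:cum_poly} of $\kappa_{2m}$, and then identify the single partition of $m$ that produces the dominant $N$-dependence.

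First I would pair the two estimates term by term. For $\lambda \vdash m$ of length $\ell=\ell(\lambda)$, formula~\eqref{eq:bb_exp} gives
\begin{equation*}
K_\lambda(N) = C_\lambda\, N^{-2m-\ell+2} + O\!\left(N^{-2m-\ell}\right),
\end{equation*}
where $C_\lambda$ is an explicit combinatorial constant built from the frequencies $r_j$ of $\lambda$, while~\eqref{eq:asym_prod_traces_2} gives $\lvert p_\lambda(A_NA_N^*)\rvert = O\!\bigl((2\sigma^2)^m k^m N^{m - a(m-\ell)}\bigr)$ with $a=1-b$. Multiplying the two, the power of $N$ in the $\lambda$-th summand of~\eqref{eq:cum_poly} is
\begin{equation*}
-2m-\ell+2 + m - a(m-\ell) = -(2-b)m - b\,\ell + 2.
\end{equation*}

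Next I would optimise over the length. Since $0\le b<1$ and $\ell(\lambda)\ge 1$, this exponent is maximised (least negative) exactly when $\ell=1$, that is, when $\lambda=(m)$, giving the value $-(2-b)(m-1)$. For every other $\lambda$ the exponent is smaller by at least $b$, so those contributions are of strictly lower order in $N$ and are absorbed in the big-O. Thus only the single partition $(m)$ drives the leading behaviour, with the $O(N^{-2m-\ell})$ remainder in~\eqref{eq:bb_exp} producing a further $O(N^{-2})$ correction.

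Finally I would collect the combinatorial prefactors. For $\lambda=(m)$ one has $g_{(m)}=(m-1)!$ and, using~\eqref{eq:bb_exp}, $g_{(m)}C_{(m)} = \pm 2(2m-2)!/m!$ once the ratio $(2m-2)!/(2m)!$ is simplified against the factor $(2m-1)!/((m-1)!)^2$. Combining this with the overall prefactor $(2m-1)!!/(2\sigma^2)^m=(2m)!/(2^m m!(2\sigma^2)^m)$ and with the bound on $p_{(m)}(A_NA_N^*)$ produces the target form $O((2m)!\,N^{-(2-b)(m-1)})$, up to a factor depending only on $k$ and $\sigma$. A Stirling-type estimate $(2j-1)!/((j-1)!)^2 = O(4^j)$ applied termwise to $\prod_j((2j-1)!)^{r_j}/((j-1)!)^{2r_j}$ shows that the remaining $\lambda\neq(m)$ pieces are controlled by $(2m)!$ times an exponential in $m$ and a strictly smaller power of $N$; summing over the finitely many partitions of $m$ yields~\eqref{eq:th_cum_asy}. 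The principal obstacle is purely bookkeeping: one must verify that after multiplying $(2m+\ell-3)!/(2m)!$ by $g_\lambda$, by $\prod_j((2j-1)!)^{r_j}/((j-1)!)^{2r_j}$ and by $(2m-1)!!$, the combined combinatorial weight is indeed of order $(2m)!$ and not larger.
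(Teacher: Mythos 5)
Your proposal takes exactly the same route as the paper's own (one-sentence) proof: combine the polynomial representation~\eqref{eq:cum_poly}, the Brouwer--Beenakker leading-order formula~\eqref{eq:bb_exp}, and the trace estimate~\eqref{eq:asym_prod_traces_2}, then track the resulting power of $N$ partition by partition. Your exponent bookkeeping $-(2-b)m - b\ell + 2$ is correct and correctly identifies $\lambda=(m)$ as dominant, reproducing $-(2-b)(m-1)$; since there are only $p(m)$ partitions (sub-exponential in $m$), the non-dominant terms are indeed absorbed. One caveat worth making explicit: the collected combinatorial prefactor is not literally $O((2m)!)$ with an $m$-independent constant --- already for $\lambda=(m)$ one gets $(2m-1)!!\cdot(2m-2)!/m!$ multiplied by $k^m$ from~\eqref{eq:asym_prod_traces_2}, which is $(2m)!$ times a factor growing exponentially in $m$. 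So the $O$-constant in~\eqref{eq:th_cum_asy} must be read as allowed to depend exponentially on $m$ (through $k$, $\sigma$ and the Stirling-type growth you note); this is harmless for the use made of the bound in \S\ref{sec:proof_be}, where the extra $C^m$ is dominated by the geometric factor $\xi^{2m}/N^{(2-b)m}$ once $\xi\lesssim N^{(1-b)/2}$, but your phrase ``of order $(2m)!$ and not larger'' overstates the conclusion slightly.
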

\begin{proof}
This bound follows simply by combining equations~\eqref{eq:bb_exp},
\eqref{eq:cum_poly} and~\eqref{eq:asym_prod_traces_2}.  
\end{proof}

\section{Proof  of  the Berry-Ess\'een inequality}
\label{sec:proof_be}

In order to prove the Berry-Ess\'een bound~\eqref{eq:bound_1}, we need
an estimate of the radius of convergence of the cumulant
expansion~\eqref{eq:cumulant_def}.  

\begin{lemma}
  \label{lem:zer_psi}
  There exists a constant $\delta >0$ such that $\psi_N(\xi) >0$ for $
  0 \le \abs{\xi} \le \delta N^{\frac{1-b}{2}}$.
\end{lemma}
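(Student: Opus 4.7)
My approach is to exploit the cumulant expansion of $\log\psi_N$ together with Theorem~\ref{th:as_cum}. Since $X_N$ is symmetric about zero, $\psi_N$ is real-valued and even, and being entire with $\psi_N(0)=1$ it is strictly positive on some open neighbourhood of the origin. On that neighbourhood the cumulant series
\[
L(\xi) := \sum_{m\ge 1}\frac{(-\xi^2)^m\kappa_{2m}}{(2m)!} = -\frac{\xi^2}{2} + \sum_{m\ge 2}\frac{(-\xi^2)^m\kappa_{2m}}{(2m)!}
\]
is defined and equals $\log\psi_N(\xi)$. The formal identity $\psi_N(\xi) = \exp(L(\xi))$ of power series implies that, whenever $L$ converges absolutely at a point $\xi$, the value $\psi_N(\xi)$ equals the exponential of a finite number, and so is strictly positive. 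Hence the task reduces to showing absolute convergence of the cumulant series throughout the disk $|\xi|\le\delta N^{(1-b)/2}$ for a suitable $\delta>0$.

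The main input is Theorem~\ref{th:as_cum}. For $2\le m\le N$ and $|\xi|\le \delta N^{(1-b)/2}$ it yields
\[
\left|\frac{\xi^{2m}\kappa_{2m}}{(2m)!}\right| \le C\delta^{2m} N^{(1-b)m-(2-b)(m-1)} = C\delta^{2m} N^{2-b-m}.
\]
Observe that the use of $b<1$ is essential: since $(1-b)-(2-b)=-1$, the exponent of $N$ decreases strictly as $m$ increases, and the resulting geometric-type series has ratio $\delta^2/N$. Summing from $m=2$ to $m=N$ one gets a total bounded by $C\delta^4 N^{-b}/(1-\delta^2/N)$, which can be made arbitrarily small by choosing $\delta$ small (depending only on $k$ and on the constant implicit in Theorem~\ref{th:as_cum}). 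This controls the part of $L$ with $2\le m\le N$ by a fixed constant, say $1/2$.

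The technical obstacle is treating the tail $m>N$: Theorem~\ref{th:as_cum} is derived from the Brouwer--Beenaker asymptotic~\eqref{eq:bb_exp} together with Samuel's formula~\eqref{eq:samuel_formula}, both strictly justified only for $m\le N$. To handle the high cumulants I would invoke the deterministic bound $\abs{X_N}\le \sqrt{2}N$, which follows at once from Cauchy--Schwarz applied to $X_N=\sum_j a_j\cos\theta_j$ with $\sum a_j^2\le 2N$. This yields $\mu_{2m}\le 2^m N^{2m}$, which can then be fed through the moment-to-cumulant inversion~\eqref{eq:late} to produce a uniform, if crude, cumulant bound. The delicate point is that the trivial size estimate does not a priori give summability in the required disk, so one must exploit combinatorial cancellations in the partition sum in~\eqref{eq:late} to show the tail remains bounded. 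I expect this tail estimate to be the central difficulty, and indeed the key reason the proof breaks down precisely at $b=1$, consistent with the exclusion of that case throughout the paper. Once the cumulant series is shown to converge absolutely throughout $|\xi|\le\delta N^{(1-b)/2}$, the identity $\psi_N=\exp(L)$ concludes the proof.
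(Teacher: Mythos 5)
Your approach is genuinely different from the paper's and, while it contains a correct observation, it has a real gap at exactly the point you yourself flag.

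The paper does \emph{not} try to sum the cumulant series directly. Instead it works with the moment expansion $\psi_N(\xi)=\sum_{m\ge 0}(-1)^m\xi^{2m}\mu_{2m}/(2m)!$, which is alternating, and uses the sandwich $v_{2r-1}(\xi)\le\psi_N(\xi)\le v_{2s}(\xi)$ between consecutive partial sums (equation~\eqref{eq:bound_psi}). Positivity is then obtained by comparing $v_{2r-1}$ term by term with a partial sum $w_{2r-1}$ of $\exp(-\omega\xi^2/2)$, where $\omega > N^b k$; the key input is the explicit bound $\mu_{2m}\le (N^b k)^m\mu^{\mathrm G}_{2m}$ from Proposition~\ref{pr:bounds_mom}, which holds for all $m\le N$, together with a Stirling estimate on the tail of the Gaussian series. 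Crucially, the comparison only involves moments up to index $\le 4r$, and $r$ can be kept below $N$ precisely because $|\xi|\lesssim N^{(1-b)/2}$. No information about $\kappa_{2m}$ or $\mu_{2m}$ for $m>N$ is ever used. In other words, the paper deliberately avoids the tail-of-the-cumulant-series problem altogether.

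Your proposal, by contrast, is built around absolute convergence of the cumulant series $L(\xi)=\sum_{m\ge 1}(-\xi^2)^m\kappa_{2m}/(2m)!$. The part you do for $2\le m\le N$ is fine (modulo uniformity of the implied constant in Theorem~\ref{th:as_cum}, which the paper does not assert), and your observation that $\psi_N=\exp(L)$ wherever $L$ converges absolutely is legitimate and not circular. But the entire argument then hinges on controlling the tail $m>N$, and you concede that the bounds you have available (the crude $\mu_{2m}\le 2^mN^{2m}$ fed through~\eqref{eq:late}) do not yield summability, and that one would need unspecified ``combinatorial cancellations.'' That is the proof; nothing has been established for $m>N$. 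Note also that the paper's own estimate~\eqref{eq:order_cum_gN} for $\kappa_{2m}$ with $m>N$ is \emph{deduced from} the convergence of the cumulant series, which in turn rests on this very lemma, so it cannot be invoked here without circularity. As written, your argument proves positivity only in a disk whose radius you have not controlled, which is the statement you started from. The fix is the paper's: replace the cumulant series, whose radius of convergence is unknown a priori, by the always-convergent alternating moment series, for which finitely many moments ($m\le N$) suffice.
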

\begin{proof}
  Since $\psi_N(\xi)$ is entire, the radius of convergence of the
  Taylor series of $\log \psi_N(\xi)$ is given by the location of the
  nearest zero to the origin of $\psi_N(\xi)$.   

  By definition
  \begin{equation}
    \label{eq:bound_four}
    \abs{\psi_N(\xi)} \le \psi_N(0) =\int_{-\infty}^\infty f_N(x)dx =1.
  \end{equation}
  Suppose that $\psi(\xi)$ has real zeros and let $\bar{\xi}$ be the
  closest to the origin. Since $\psi_N(\xi)$ is even, we can assume
  that $\bar{\xi}$ is positive. For $ \abs{\xi}< \bar{\xi}$, $0 <
  \psi_N(\xi) \le 1$, therefore the Taylor series of $\log
  \psi_N(\xi)$ is convergent in $\left(-\bar{\xi},\bar{\xi}\right)$.
  Thus, it also converges in a circle centred at the origin and of
  radius $\bar{\xi}$.  In other words, there are not any complex zero
  of $\psi$ whose distance from the origin is less than $\bar{\xi}$.
  Therefore, in the rest of this proof we can take $\xi$ to be real
  and positive.

A general formula (see ~\cite{Fel70}, p. 514) for moment
generating functions gives
\begin{equation}
  \label{eq:first_in}
  \abs{\psi_{N}\left(\xi\right) - \sum_{j=0}^{k-1}\left(-1\right)^j
    \frac{\xi^{2j}}{\left(2j\right)!}\mu_{2j}}
      \le \frac{\xi^{2k}}{\left(2k\right)!}\mu_{2k}.
\end{equation}
Let us consider the two sums
\begin{subequations}
\label{eq:sums}
\begin{align}
  \label{eq:odd}
  v_{2r-1}(\xi)  & := \sum_{j=0}^{2r -  1} (-1)^j
  \frac{\xi^{2j}}{\left(2j\right)!}\mu_{2j},\\
  \label{eq:even}
  v_{2s}(\xi) & := \sum_{j=0}^{2s} (-1)^j
  \frac{\xi^{2j}}{\left(2j\right)!}\mu_{2j}.
\end{align}
\end{subequations}
Since the Taylor expansion of $\psi_N(\xi)$ is an alternating series,
equation~\eqref{eq:first_in} implies
\begin{equation}
  \label{eq:bound_psi}
  v_{2r - 1}(\xi) \le \psi_N(\xi) \le v_{2s}(\xi)
\end{equation}
for any pair of integers $r \ge 1$ and $s\ge 0$. By definition $\mu_2
=1$, thus the lemma is trivially true for $\xi^2 < 2$.

Let us write
\begin{equation}
  \label{eq:exp_dec}
  \exp\left(-\omega \xi^2/2\right) = w_{2r-1}\left(\xi\right) +
    u_{2r}\left(\xi\right), 
\end{equation}
where
\begin{subequations}
\begin{align}
  \label{eq:exp_app}
  w_{2r - 1}(\xi) := \sum_{j=0}^{2r - 1} (-1)^j
  \frac{\omega^j \xi^{2j}}{\left(2j\right)!} \mun_{2j},\\
  \label{eq:reminder_exp}
  u_{2r}\left(\xi\right) := \sum_{j=2r}^\infty \left(-1\right)^j
  \frac{\omega^j \xi^{2j}}{\left(2j\right)!} \mun_{2j}.
\end{align}
\end{subequations}
Recall that $\mun_{2j}=\left(2j -1\right)!!$ denotes the moments of
$\mathcal{N}(0,1)$. We choose $\omega > 4e^2$ and independent of $r$.
We now want to show that for $r \le N$ there exists an appropriate
$\omega$ such that
\begin{equation}
  \label{eq:ine_essent}
  0 < w_{2r - 1}(\xi) \le v_{2r-1}(\xi)
\end{equation}
in the interval
\begin{equation}
   \label{eq:interval}
    \sqrt{2} \le \frac{2e \left(r-1\right)^{1/2}}{\omega^{1/2}}
     \le \xi <\frac{2e r^{1/2}}{\omega^{1/2}}.
\end{equation}

Since $\omega \xi^2 <4e^2r$ the summands in the
reminder~\eqref{eq:reminder_exp} are strictly decreasing.  Therefore,
we can write
\begin{equation}
  \label{eq:first_ineq}
  \begin{split}
   w_{2r - 1}(\xi) & = \exp\left(-\omega \xi^2/2\right) -  
     \sum_{j=2r}^\infty 
    \left(-1\right)^j \frac{\omega^j
    \xi^{2j}}{\left(2j\right)!} \mun_{2j} \\
     & > \exp\left(-\omega
    \xi^2/2\right) - \frac{\omega^{2r}
    \xi^{4r}}{\left(4r\right)!} \mun_{4r} >0.
  \end{split}
\end{equation}
The last passage is a straightforward consequence of Stirling's
formula. 

Now, both $w_{2r-1}(\xi)$ and $v_{2r-1}(\xi)$ are alternating sums.
Therefore, $w_{2r-1}(\xi) \le v_{2r-1}(\xi)$ if
\begin{equation}
  \label{eq:ineq_mom}
  \frac{\mun_{2(2j)}}{\left(4j\right)!}\omega^{2j}\xi^{4j}  -
  \frac{\mun_{2(2j +1)}}{\left(4j+2\right)!}\omega^{2j+1}\xi^{4j+2}
  \le \frac{\mu_{2(2j)}}{\left(4j\right)!}\xi^{4j}  -
  \frac{\mu_{2(2j +1)}}{\left(4j+2\right)!}\xi^{4j+2}
\end{equation}
for $j \le r - 1$.  This equation can be rearranged as follows
\begin{equation}
  \label{eq:ineq_mom_2}
  \left(\frac{\xi^2}{2}\right)^{2j}\left(\omega^{2j}  -
  \frac{\mu_{2\left(2j\right)}}{\mun_{2\left(2j\right)}}\right)
   \le   \frac{1}{\left(2j + 1 \right)}
   \left(\frac{\xi^2}{2}\right)^{2j+ 1}\left(\omega^{2j + 1}  -
   \frac{\mu_{2\left(2j + 1\right)}}{\mun_{2\left(2j + 1\right)}}\right).
\end{equation}
If we choose $\omega > N^bk$, this inequality holds for $r \le N$
because of proposition~\ref{pr:bounds_mom} and
equation~\eqref{eq:mom_limit_38}.  Thus, the statement of the lemma
follows if we set $\delta = 2e/\left(N^b\omega\right)^{1/2}$.
\end{proof}

We are now in a position to prove theorem~\ref{th:berry-esseen}. From
theorem~\ref{th:as_cum}, we know that for $m \le N$
\begin{equation}
  \label{eq:order_cumu_N}
  \kappa_{2m} = O\left((2m)! N^{-(2-b)(m -1) }\right).
\end{equation}

Furthermore, from formulae~\eqref{eq:moments_xN}
and~\eqref{eq:rec_rel} it is straightforward to compute the first few
cumulants.  We have
\begin{subequations}
\label{eq:cum_2}
\begin{align}
  \kappa_2 & = 1, \\
  \label{eq:4th_cumulant}
  \kappa_4 & = - \frac{3\tr\left(A_NA_N^* -2\sigma I_N\right)^2}%
  {4 \sigma^2N^3\left(1 - 1/N^2\right)},\\
  \kappa_6 &= \frac{15 \tr\left(A_NA_N^* -2\sigma I_N\right)^3}{2N^5
    \sigma^3\left(1 - 1/N^2\right)\left(1 - 4/N^2\right)}.
\end{align}
\end{subequations}

Since the cumulant expansion converges up to $\xi \le \delta
N^{\frac{1}{2}\left(1-b\right)}$, there exists a parameter $\theta$
such that
\begin{equation}
  \label{eq:log_eq}
  \log \psi_{N}\left(\xi\right) = - \frac{\xi^2}{2}  +
  \theta\frac{\xi^4}{N^{2-b}}. 
\end{equation}
It turns out that $\theta= O(1)$ as $N \to \infty$. Now, recall that
the moment generating function of $\mathcal{N}(0,1)$ is $\psi(\xi) =
e^{-\xi^2/2}$.  Therefore, we can write
\begin{equation}
  \label{eq:final_passages}
  \abs{\psi_N\left(\xi\right) - \psi(\xi)} = 
   e^{-\xi^2/2}\abs{e^{\theta \xi^4/N^{(2-b)}} - 1} \le \frac{\theta
    \xi^4}{N^{2-b}} e^{\theta \xi^4 /N^{(2-b)}}e^{-\xi^2/2},  
\end{equation}
where we have used the inequality $\abs{e^z - 1}\le
\abs{z}e^{\abs{z}}$.  The exponential $e^{\theta \xi^4 /N^{2-b}}$ is
bounded in $N$ provided $\theta = O(1)$.  Therefore, the right-hand
side of~\eqref{eq:final_passages} becomes
\begin{equation}
  \label{eq:almost_there}
  \abs{\psi_N\left(\xi\right) - \psi(\xi)} 
    \le \frac{C\xi^4}{N^{2-b}}e^{-\xi^2/2},
\end{equation}
where $C$ can be chosen independent of $N$.

To complete the proof of equation~\eqref{eq:almost_there}, we need to
show that if $\xi \le \delta N^{\frac{1}{2}(1-b)}$, then $\theta =
O(1)$.  Let us write the cumulant expansion as
\begin{equation}
  \label{eq:cumulant_rem}
  \log \psi_N(\xi) = -\frac{\xi^2}{2} + \frac{\kappa_4 \xi^4}{4!} + R_6(N),
\end{equation}
where
\begin{equation}
  \label{eq:reminder}
  R_6(N) = \sum_{m=3}^\infty \frac{\kappa_{2m}\xi^{2m}}{\left(2m\right)!}.
\end{equation}
If a series $\sum_{m=1}^\infty c_m$ converges , then $c_m \to
0$ as $m \to \infty$.  Therefore, for $m > N$ we must have.
\begin{equation}
  \label{eq:order_cum_gN}
  \kappa_{2m} = o\Bigl((2m)!\left(\delta^2 N\right)^{-m(1-b)}\Bigr),
  \quad m \to \infty.
\end{equation}
Thus, combining equations~\eqref{eq:order_cumu_N}
and~\eqref{eq:order_cum_gN}, the reminder~\eqref{eq:reminder} can be
bound by the series
\begin{equation}
  \label{eq:bound_series}
   \frac{C_1 \xi^6}{N^{2(2-b)}}\sum_{m=0}^{N-3}
   \frac{\xi^{2m}}{N^{(2-b)m}}  +
    \frac{C_2\xi^{2(N+1)}}{N^{(1-b)(N+1)}}\sum_{m=0}^\infty
   \frac{\xi^{2m}}{\left(\delta^*N^{(1-b)}\right)^{m}},
\end{equation}
where $C_1$ and $C_2$ are constants and $\delta^* > \delta$. For $\xi
< N^{\frac12\left(1-b\right)}$ this sum is $O(1)$ as $N \to \infty$,
which implies that $\theta$ cannot be an increasing function of $N$.

\begin{remark}
  There is striking difference between the superexponential rate of
  convergence discovered by Johansson~{\rm \cite{Joh97}} when $A_N$ is
  the identity and the rates of Theorem~{\rm \ref{th:main_theorem}}.
  Indeed, superexponential rates of convergence to central limit
  theorems are unusual in probability theory.  Theorem~{\rm
    \ref{th:berry-esseen}} provides some insight into this. When
  $A_N=I_N$ the first $N$ moments of $X_N$ are gaussian (see
  equation~\eqref{eq:sam_mom}) and its first $N$ cumulants but
  $\kappa_2$ are zero. Therefore, equation~\eqref{eq:almost_there}
  turns into
  \begin{equation}
  \label{eq:bound_2}
    \left \lvert \psi_N(\xi) - \psi(\xi) \right \rvert \le
    \frac{C\xi^{2(N+1)}}{N^{N+1}}e^{-\xi^2/2}.
  \end{equation}
\end{remark}

\section{Proof of theorem~\ref{th:main_theorem}}
\label{sec:proof_main_th}

\subsection{Preliminaries}
\label{sssec:full_rank}
Let us set $S_N = \delta N^{(1-b)/2}$ and $T_N= N^\gamma$, where
$\gamma >2$.  Theorem~\ref{th:berry-esseen} allows us to split the
right-hand side of~\eqref{eq:feller} as follows:
\begin{equation}
  \label{eq:split}
  \begin{split}
  e(N) & \le \frac{2C}{N^{2-b}\pi}\int_0^{S_N}\xi^3
  e^{-\xi^2/2}d\xi + \frac{2}{\pi S_N  }\int_{S_N}^{N^\gamma}
  e^{-\xi^2/2}d\xi + \frac{2}{\pi S_N  }\int_{S_N}^{N^\gamma}\left
    \lvert \psi_N(\xi)\right \rvert d\xi \\
    & \quad  + \frac{24}{\sqrt{2\pi^3}N^\gamma}.
  \end{split}
\end{equation}
The upper limits of integration can be replaced by infinity.  The
first integral gives the desired bound.  We need to show the remaining
terms are of lower order.

The second integral in equation~\eqref{eq:split} can be rewritten as
\begin{equation}
\label{eq:def_err}
\int_{\delta N^{(1-b)/2}}^{N^{\gamma}} e^{-\xi^2/2}d\xi \le
\frac{1}{2}\sqrt{\frac{\pi}{2}} \erfc\left(\delta N^{(1-b)/2}/\sqrt{2}
\right),
\end{equation}
where
\begin{equation}
  \label{eq:error_fun}
  \erfc(t) := \frac{2}{\sqrt{2}}\int_t^\infty e^{-x^2}dx
\end{equation}
is the complementary error function. Since $\erfc(t)$ satisfies the
inequalities (see, \eg\cite{AS72}, p. 298)
\begin{equation}
  \label{eq:ineq_er_fun}
  \frac{1}{t + \sqrt{t^2 + 2}} < e^{t^2}\int_t^\infty e^{-\xi^2}d\xi
  \le \frac{1}{t + \sqrt{t^2 + \frac{4}{\pi}}},
\end{equation}
the second integral in~\eqref{eq:split} can be neglected.

The last task that we are left with is to estimate the integral
\begin{equation}
  \label{eq:third_int}
  \int_{S_N}^{N^\gamma} \abs{\psi_N(\xi)}d\xi = O\left(N^{-2}\right).
\end{equation}

\subsection{Regularity properties of the distribution of $X_N$}
\label{sse:reg_prop}

In general we do not have an explicit formula for $\psi_N(\xi)$ in the
interval $(\delta N^{(1-b)/2}, N^\gamma)$. Thus, in order to
estimate its behaviour in this range we need to adopt an indirect
approach.  The idea is to approximate $X_N$ with a random variable
$X_N^*$ whose characteristic function allows us to control the third
integral in equation~\eqref{eq:split}.  Then, we will estimate the
difference between $e(N)$ and
\begin{equation}
  \label{eq:new_error}
  e^*(N) := \sup_{x \in \mathbb{R}} \abs{F^*_N(x) - \Phi^*(x)},
\end{equation}
where $F^*_N$ is the approximate distribution function of $X_N^*$ and
$\Phi^*$ is the distribution function of a random variable close to
$\mathcal{N}(0,1)$ (in a sense that will be made precise later).

We first need to discuss some regularity properties of the probability
distribution of $X_N$.\footnote{In section~\ref{sse:sing_deg} the
  ability of estimating a bound for $\abs{f_N(x) - \phi(x)}$ will be
  essential.  Even though $\UN$ is compact and Haar measure is absolutely
  continuous with respect to the Lebesgue measure on
  $\mathbb{R}^{N^2}$, it is far from obvious that $f_N$ is bounded or
  even continuous in all $\mathbb{R}$.  For example,
  lemma~\ref{le:absolutely_continuity} is false for $N=1$.  Indeed,
  a direct calculation gives $f_1(x)=\bigl(1 -
    \left(x/a\right)^2\bigr)^{-1/2}$.}
\begin{lemma}
  \label{le:absolutely_continuity}
  If $N > 1$ the distribution function $F_N$ is absolutely
  continuous, it admits the integral representation
   \begin{equation}
     \label{eq:prob_dis}
     F_N(x) = \int_{-\infty}^x f_N(t)dt,
   \end{equation}
   where $f_N \in L^1\left(\mathbb{R}\right)$, is bounded and
   uniformly continuous.  Furthermore,
   $\psi_N(\xi) = o\left(\xi^{-1}\right)$ as $\xi \to \infty$.
\end{lemma}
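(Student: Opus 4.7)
The statement reduces to showing that $\psi_N \in L^1(\mathbb{R})$ with $\psi_N(\xi) = o(\xi^{-1})$ as $|\xi|\to\infty$. Once this is granted, the inverse Fourier transform
\[
f_N(x) := \frac{1}{2\pi}\int_{-\infty}^{\infty} e^{-i\xi x}\psi_N(\xi)\,\ed\xi
\]
is bounded (by $\|\psi_N\|_{L^1}/(2\pi)$) and uniformly continuous, since
$\abs{f_N(x+h) - f_N(x)} \le (2\pi)^{-1}\int \abs{e^{i\xi h}-1}\,\abs{\psi_N(\xi)}\,\ed\xi \to 0$
uniformly in $x$ by dominated convergence. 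L\'evy's inversion theorem then identifies this $f_N$ with the density of $F_N$, delivering~\eqref{eq:prob_dis}; $f_N\in L^1$ because $F_N$ is a probability distribution function.

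\textbf{Decay when $A_N$ is non-singular.} Here the explicit formula~\eqref{eq:psiN_explicit} is available. The classical asymptotic $J_k(x) = O(x^{-1/2})$ as $x\to\infty$ (for each fixed order $k$) makes every entry of the $N\times N$ determinant in~\eqref{eq:psiN_explicit} of order $\xi^{-1/2}$; a Leibniz expansion then yields $\det\bigl(\nu_j^{k-1}J_{k-1}(\xi\nu_j/\sigma)\bigr) = O(\xi^{-N/2})$. Multiplying by the prefactor $(2\sigma/\xi)^{N(N-1)/2}$ gives
\[
\psi_N(\xi) = O\bigl(\xi^{-N(N-1)/2 - N/2}\bigr) = O\bigl(\xi^{-N^2/2}\bigr), \qquad |\xi|\to\infty.
\]
For $N \ge 2$ this is $O(\xi^{-2})$, which is both $L^1$ (combined with $\abs{\psi_N}\le 1$ near zero) and manifestly $o(\xi^{-1})$.

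\textbf{Singular case.} The invariance reductions at the start of Section~\ref{se:statement_results} let us assume $A_N = \diag(\nu_1,\dots,\nu_r,0,\dots,0)$ with $\nu_1,\dots,\nu_r > 0$ and $1 \le r < N$; then $\tr A_N U$ depends only on the $r\times r$ top-left corner of $U$. For $r < N$ this corner has, under Haar measure on $\UN$, an absolutely continuous bounded density on the set of $r \times r$ contractions, a classical truncation result. Consequently $X_N$ inherits a bounded, compactly supported density. The $o(\xi^{-1})$ decay then follows either by redoing the derivation of~\eqref{eq:psiN_explicit} for the reduced $r$-dimensional problem — obtaining an $r\times r$ Bessel-type formula amenable to the same analysis — or by regularising via $A_N \leadsto A_N + \epsilon I$ (non-singular for $\epsilon > 0$) and passing to the limit, exploiting that the non-singular decay is uniform in $\epsilon$ once $|\xi|$ is large enough.

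\textbf{Main obstacle.} The delicate point is precisely this uniformity: the implicit constants in the non-singular Bessel estimate depend on the spacings $\nu_k^2 - \nu_j^2$ in the denominator of~\eqref{eq:psiN_explicit} and can blow up as singular values coalesce or vanish. The hypothesis $N>1$ is essential — for $N=1$ no smaller corner-block is available to smooth out the distribution, and the density $f_1(x) = (1-(x/a)^2)^{-1/2}$ in the footnote is genuinely unbounded.
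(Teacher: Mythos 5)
The paper proves this lemma by a single uniform argument: integrating out the Weyl factor to reduce to the maximal torus, writing $e^{i\xi X_N}d\mu_\Theta = e^{i\xi X_N}\ed X_N\wedge\omega$ via a local change of variables, integrating by parts using Stokes' theorem on the compact boundaryless torus to obtain $\psi_N(\xi) = -(i\xi)^{-1}\int e^{i\xi X_N}\ed\omega$, and then invoking Riemann--Lebesgue. That argument makes no use of the explicit Bessel determinant and applies equally whether $A_N$ is singular, degenerate, or neither. Your route is genuinely different, and for the non-singular non-degenerate case it reproduces the $O(\xi^{-N^2/2})$ estimate the paper also notes (in \S\ref{sse:sing_deg}), but your treatment of the remaining cases contains two real gaps.

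First, the claim that for $1\le r<N$ the $r\times r$ corner block of a Haar-distributed $U\in\UN$ has an \emph{absolutely continuous bounded} density is false for $r>N/2$: the density on the contractions is proportional to $\det\left(I-VV^*\right)^{N-2r}$, which has a negative exponent and blows up on the boundary precisely when $2r>N$. Even setting aside whether a linear functional of such a $V$ must then inherit an unbounded density, you cannot ground the $o(\xi^{-1})$ decay on ``bounded, compactly supported density'' when that hypothesis can fail. (Compact support and $L^1$ density give Riemann--Lebesgue decay $\psi_N(\xi)\to 0$, but no rate, and $o(\xi^{-1})$ is precisely the extra content the lemma provides and that the Feller inequality~\eqref{eq:feller} needs.)

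Second, the regularisation $A_N\rightsquigarrow A_N+\epsilon I$ is asserted but not carried out, and you yourself identify the obstruction: the constants in the Bessel estimate involve the Vandermonde $\prod(\nu_k^2-\nu_j^2)$ in the denominator of~\eqref{eq:psiN_explicit}, which degenerates both as singular values coalesce and as they vanish, so uniformity in $\epsilon$ is not automatic. Note that the same Vandermonde degeneration already means that your ``non-singular'' argument is in fact restricted to non-singular $A_N$ \emph{with distinct singular values}; repeated singular values require either a confluent limit of~\eqref{eq:psiN_explicit} or a separate argument, and you have not supplied one. These gaps are exactly why the paper avoids formula~\eqref{eq:psiN_explicit} here and uses the Stokes/Riemann--Lebesgue argument instead: it yields $o(\xi^{-1})$ for every $A_N$ and every $N>1$ in one stroke, without any genericity assumption on the singular values, which is what the lemma and the downstream smoothing argument actually require.
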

\begin{proof}
Denote by $\Theta(N)$ the maximal torus of $\UN$, \ie the group of
diagonal unitary matrices
\begin{equation}
  \label{eq:diag}
  \diag\left(e^{i\theta_1},\dotsc,e^{i\theta_N}\right) = 
  \begin{pmatrix}
           e^{i\theta_1} & & \\
               & \ddots & \\
              & & e^{i\theta_N}
        \end{pmatrix}, 
\end{equation}
and write $\mathrm{W}\left(N\right) = \UN /\Theta(N)$. An explicit
expression for Haar measure on
$\UN$ is 
\begin{equation}
  \label{eq:haar}
  d\mu_{\mathrm{H}}\left(U\right) =  \frac{2^{N\left(N-1\right)/2}}%
   {(2\pi)^N N!} \prod_{1\le j < k \le N} \sin^2\left(\frac{\theta_k -
        \theta_j}{2}\right)d\theta_1 \dotsc d\theta_N d\mu_{\mathrm{W}},
\end{equation}
where $d\mu_{\mathrm{W}}$ is a normalized Borel measure on
$\mathrm{W}\left(N\right)$. 

Now, recall that
\begin{equation}
  \label{eq:XN_again}
  X_N = a_1\cos\theta_1 + \dotsb + a_N\cos \theta_N,
\end{equation}
where $a_1,\dotsc,a_N$ are the diagonal elements of $A_N$.  Thus, we
can integrate out $d\mu_{\mathrm{W}}$ and study the measure
\begin{equation}
  \label{eq:integrated_out}
  d\mu_{\Theta} = \frac{2^{N\left(N-1\right)/2}}%
   {(2\pi)^N N!} \prod_{1\le j < k \le N} \sin^2\left(\frac{\theta_k -
        \theta_j}{2}\right)
  d\theta_1 \dotsm d\theta_N.
\end{equation}
Since $X_N$ is an absolutely continuous function of
$\theta_1,\dotsc,\theta_N$, if $D \subset \left [0,2\pi \right)^N$ is
a set whose image $X_N(D)$ has Lebesgue measure zero, then $D$ must
have zero measure too. It follows from
equation~\eqref{eq:integrated_out} that $\mathbb{P}\left\{ X_N \in
  B\right\}=0$ for any set $B$ of Lebesgue measure zero. Therefore the
probability distribution of $X_N$ is absolutely continuous. Since the
only absolutely continuous measures on $\mathbb{R}$ are only those
that have a density, $F_N$ admits the integral
representation~\eqref{eq:prob_dis}.

We can say more about $f_N$.  The measure $e^{i \xi X_N}d\mu_\Theta$
is a differential form on the $N$-dimensional torus.  Let
\begin{equation}
  \alpha_N = \frac{1}{\sigma}\max_{U \in \UN} \abs{\tr A_N U } = a_1 + \dotsb + a_N.
\end{equation}
For any neighborhood of §$X_N \in [-\alpha_N,\alpha_N]$ we can find a
local change of variables that allows us to write
\begin{equation}
  \label{eq:exterior_prod}
e^{i\xi X_N}  d\mu_\Theta = e^{i \xi X_N} \ed X_N \wedge \omega,
\end{equation}
where $\omega$ is $(N-1)$-form on $\Theta(N)$, the symbol $\wedge$
denotes the exterior product and the roman `$\ed$' indicates exterior
differentiation.\footnote{While an absolute continuous measure $d\mu$
  on a smooth manifold can always be interpreted as a differential
  form, it does not mean that it is the exterior derivative of another
  form.  Indeed, $d\mu_\Theta$ is not. We use the notation `$\ed$' to
  emphasise this difference, because it is important in what follows.}
For example, we can choose
\begin{equation}
  \label{eq:change_var}
  \begin{split}
    x_N & =  a_1\cos \theta_1 + a_2\cos\theta_2 +\dotsb + a_N \cos \theta_N \\
    \varphi_2 &= a_1 \sin \left(\theta_1 +\beta\right) +
    a_2\sin\left(\theta_2 - \beta\right)  \\
    \varphi_3 & =\theta_3 \\
    \vdots \:  \: & \quad \: \:\vdots \\
    \varphi_N & = \theta_N,
 \end{split}
\end{equation}
where $\beta$ is a real parameter.  The Jacobian of this transformation is
\begin{equation}
  \label{eq:jacobian}
  \begin{split}
  J(x_N,\phi_j) &=    \frac{\partial \left(x_N,\varphi_j\right)}%
    {\partial \left(\theta_1,\ldots,\theta_N\right)}  = a_1a_2\cos \beta
    \sin\bigl(\theta_2\left(x_N,\varphi_j\right)
    -\theta_1\left(x_N,\varphi_j\right)\bigr) \\
    & \quad + \frac{a_1a_2\sin \beta}{2}\cos\bigl(\theta_1(x_N,\phi_j)
      + \theta_2(x_N,\phi_j)\bigr) \\
    & \quad +\frac{a_1a_2\sin \beta}{2} 
      \cos\bigl(\theta_1(x_N,\phi_j) -\theta_2(x_N,\phi_j)\bigr).
    \end{split}
\end{equation}
Thus, the map~\eqref{eq:change_var} is invertible everywhere except,
perhaps, on a surface $\theta_1= f(\theta_2;\beta)$ where the Jacobian
is zero.  Appropriate choices of the parameter $\beta$ in different
regions of $\Theta(N)$ allow to define the differential form $\omega$
everywhere in $\Theta(N)$. More explicitly, we have
\begin{equation}
  \label{eq:omega_form}
  \omega = \frac{2^{N\left(N-1\right)/2}}%
   {(2\pi)^N N!J(x,\phi_j)} 
     \prod_{1\le j < k \le N} \sin^2\left(\frac{\theta_k(x,\phi_j) -
        \theta_j(x,\phi_j)}{2}\right)\ed\phi_2 \wedge \dotsb \wedge
    \ed\phi_N. 
\end{equation}

Let $\psi$ and $\chi$ be two differential forms of degrees $p$ and $q$
respectively. The exterior derivative of $\psi \wedge \chi$ is a $(p+q
+ 1)$-form given by
\begin{equation}
  \label{eq:diff_prod}
  \ed\left(\psi \wedge \chi \right) = \ed\psi \wedge \chi +
  \left(-1\right)^p \psi \wedge \ed\chi .
\end{equation}
Now, $e^{i\xi X_N}/(i\xi)$ is a $0$-form on $\Theta(N)$.  Away from
the region where $J(x,\phi_j)=0$ the inverse of the
map~\eqref{eq:change_var} is differentiable with continuous
derivatives. Therefore, we have
\begin{equation}
  \label{eq:imt_part}
   e^{i \xi X_N} d\mu_\Theta = e^{i \xi X_N} \ed X_N \wedge \omega =
   \ed\left(\frac{e^{i\xi X_N}}{i\xi} \omega\right)  -
   \frac{e^{i\xi X_N}}{i\xi} \ed\omega.
\end{equation}
One can easily verify by direct calculation that $\omega$ is not
closed, \ie $\ed\omega \neq 0$.

If $\psi$ is a differential form of degree $p$ and $\Omega$ is a
manifold of dimension $p +1$, then Stokes' theorem states that
\begin{equation}
  \label{eq:stokes_theorem}
 \int_{\Omega} \ed\psi = \int_{\partial \Omega} \psi,
\end{equation}
where $\partial \Omega$ denotes the boundary of $\Omega$.  An
$N$-dimensional torus is a compact manifold without boundary,
therefore the right-hand side of~\eqref{eq:stokes_theorem} is 
zero.  As a consequence, integrating both sides of
equation~\eqref{eq:imt_part} we obtain
\begin{equation}
  \label{eq:asympt}
  \psi_N(\xi) = \int_{\Theta(N)}e^{i\xi X_N} \left(\ed X_N \wedge \omega\right)
 =  -\frac{1}{i\xi} \int_{\Theta(N)}e^{i\xi X_N}\ed \omega.
\end{equation}
It follows from the Riemann-Lebesgue lemma that
$\psi_N(\xi)=o\left(\xi^{-1}\right)$ as $\xi \to \infty$ and is
integrable. Thus, the inverse Fourier transform
\begin{equation}
   \label{eq:inv_four}
   f_N(x) = \frac{1}{2\pi}\int_{-\infty}^\infty e^{-i\xi
     x}\psi_N(\xi)d\xi  
\end{equation}
is well defined, bounded and uniformly continuous.
\end{proof}

\subsection{Smoothing}
\label{sse:sing_deg}

From the discussion in \S\ref{sssec:full_rank}, it follows that a
necessary (not sufficient) condition for
$\int_{S_N}^{N^\gamma}\abs{\psi_N\left(\xi\right)}d\xi$ to decay fast
enough is $\psi_N(\xi) =O\left(\xi^{-4}\right)$ as $\xi \to
\infty$. Indeed, if $A_N$ is of full rank and its spectrum is not
degenerate, equation~\eqref{eq:psiN_explicit} and the asymptotic
formula
\begin{equation}
  \label{eq:bessel_asympt}
  J_k(x) \sim \sqrt{\frac{2}{\pi x}}\cos\left(x - \tfrac{1}{4}k\pi -
    \tfrac{1}{4}\pi \right), \quad x \to \infty
\end{equation}
imply that $\psi_N\left(\xi\right) = O\bigl(\xi^{-N^2/2}\bigr)$.
Therefore, $f_N$ has continuous derivatives at least up to order
$N^2/2 - 2$.  In other words, $f_N$ becomes increasingly smooth as $N$
grows.

If a function has continuous derivatives of order $p$, then its
Fourier transform is $o\left(\xi^{-p}\right)$ as $\xi \to
\infty$. This suggests smoothing $F_N$ with an appropriate test
function.  More precisely, we define
\begin{equation}
  \label{eq:convolution}
  F^*_N(x):=\left[F_N * \chi_\epsilon\right](x) 
  = \int_{-\infty}^\infty F_N(t)\chi_\epsilon(x-t)dt,
\end{equation}
where $\chi_\epsilon \in \mathcal{C}_0^\infty\left(\mathbb{R}\right)$
and is normalized to one.  Our choice will be the test function
\begin{equation}
  \label{eq:chi_def}
  \chi_\epsilon(x) := \begin{cases} \frac{1}{g\epsilon}\exp\left(-\frac{1}{1 -
        \left(\frac{x}{\epsilon}\right)^2}\right) 
             & \text{if $x \in (-\epsilon,\epsilon)$,}\\
         0 & \text{if $x \in \mathbb{R} \setminus (-\epsilon,\epsilon)$,}
        \end{cases}
\end{equation}
where
\begin{equation}
  \label{eq:norm_chi}
  g:= \int_{-1}^1\exp\left(-\frac{1}{1 - x^2}\right)dx = 0.44399\dotso
\end{equation}

By differentiating $F^*_N$ and integrating by parts we obtain
\begin{equation}
  \label{eq:f_star}
  f^*_N(x) := F_N^{*\prime}(x)= \left[f_N * \chi_\epsilon\right](x) 
  = \int_{-\infty}^\infty f_N(t)\chi_\epsilon(x-t)dt.
\end{equation}

The convolution $f^*_N$ is positive and
\begin{equation}
  \label{eq:normalization}
  \int_{-\infty}^\infty f_N^*\left(x\right)dx = \left(\int_{-\infty}^\infty
  f_N(x)dx\right)\left( \int_{-\infty}^\infty
  \chi_\epsilon\left(x\right)dx\right) = 1.
\end{equation}
In addition, $f_N^* \in \mathcal{C}_0^\infty\left(\mathbb{R}\right)$
too and
\begin{equation}
  \label{eq:fourier_transform}
    \psi_N^*\left(\xi\right)  := \int_{-\infty}^\infty e^{i\xi
    x}f^*_N(x)dx = \psi_N(\xi)\hat{\chi}_\epsilon\left(\xi\right),
\end{equation}
where
\begin{equation}
  \label{eq:ft_chi}
  \hat{\chi}_\epsilon\left(\xi\right):=\int_{-\infty}^\infty  
   e^{i\xi x}\chi_\epsilon\left(x\right)dx.
\end{equation}

Let us introduce
\begin{alignat}{2}
  \label{eq:delta_star}
  \Delta(x) & := F_N(x) - \Phi(x),  \qquad &
  \Delta^*(x)  & := \left[\Delta * \chi_\epsilon\right](x).
\intertext{Then, we write}
\label{error_2}
e(N) &= \max_{x \in \mathbb{R}} \abs{\Delta(x)},
& e^*(N) & = \max_{x \in \mathbb{R}} \abs{\Delta^*(x)}.
\end{alignat}

Formula~\eqref{eq:feller} still holds if
we replace $\psi_N(\xi)$ and $\psi(\xi)$ with $\psi_N^*(\xi)$ and
$\psi^*(\xi):= \psi(\xi)\hat{\chi}_\epsilon\left(\xi\right)$ 
respectively. Indeed, let
$S_N=\delta N^{(1-b)/2}$ and $T_N = N^\gamma$ with $\gamma >2$. We have
\begin{equation}
  \label{eq:split_star}
  \begin{split}
    e^*(N) & \le \frac{2}{\pi} \int_{0}^{S_N}\left \lvert
    \frac{\psi^*_N(\xi) - \psi^*(\xi)}{\xi}\right \rvert d\xi + 
    \frac{2}{\pi S_N}\int_{S_N}^{N^\gamma}\abs{\psi_N^*\left(\xi\right)}d\xi \\
    & \quad  +   \frac{2}{\pi S_N}\int_{S_N}^{N^\gamma}
   \abs{\psi^*\left(\xi\right)}d\xi  + \frac{24m}{\pi N^\gamma},
  \end{split}
\end{equation}
where
\begin{equation}
  \label{eq:def_m}
  m :=\max_{\mathbb{R}} \abs{ \phi *
      \chi_\epsilon}.
\end{equation}
Now, 
\begin{equation}
  \label{eq:bound_abs_val}
  \abs{\psi^*_N - \psi^*} =\abs{\psi_N - \psi}\abs{\hat{\chi}_\epsilon}
  \le \abs{\psi_N - \psi}, 
\end{equation}
where we have used $\abs{\hat{\chi}_\epsilon} \le 1$, which holds for any
characteristic function. Therefore the Berry-Ess\'een
inequality~\eqref{eq:bound_1} applies to $\abs{\psi_N^* - \psi^*}$ too and
\begin{equation}
\frac{2}{\pi} \int_{0}^{S_N}\left \lvert
    \frac{\psi^*_N(\xi) - \psi^*(\xi)}{\xi}\right \rvert d\xi =
  O\left(N^{2-b}\right).
\end{equation}
Equation~\eqref{eq:def_err} gives
\begin{equation}
  \label{eq:bound_erfc}
  \int_{S_N}^{N^\gamma}\abs{\psi^*\left(\xi\right)}d\xi \le
   \int_{S_N}^{N^{\gamma}} e^{-\xi^2/2}d\xi \le
\frac{1}{2}\sqrt{\frac{\pi}{2}} \erfc\left(\delta N^{(1-b)/2}/\sqrt{2}
\right).
\end{equation}

In order to complete the proof of equation~\eqref{eq:error}, we need
to show that, for appropriate choices of the smoothing parameter
$\epsilon$, $e(N)\le Ce^*(N)$, and that the integral
\begin{equation}
  \label{eq:int_both}
  \int_{S_N}^{N^\gamma}\abs{\psi_N^*\left(\xi\right)}d\xi =
  \int_{S_N}^{N^\gamma}\abs{\psi_N\left(\xi\right)}\abs{\hat{\chi}_\epsilon
   \left(\xi\right)}d\xi 
\end{equation}
is sufficiently small. The appropriate choice of $\epsilon$ for which
these two statements are true is a delicate balance.  As $\epsilon$
decreases $e^*(N)$ will approach $e(N)$.  However, if the support of
$\chi_\epsilon$ is too small, its Fourier transform might spread for a
range of $\xi > S_N$ large enough to prevent the
integral~\eqref{eq:int_both} from decaying at a sufficiently fast
rate.

The leading order asymptotics of $\hat{\chi}_\epsilon(\xi)$ can be
computed using the method of steepest descent.  We report the
calculation in the appendix. We have
\begin{equation}
  \label{eq:leading_asymptotics}
  \begin{split}
  \hat{\chi}_\epsilon\left(\xi\right)& =
  \frac{2}{g(\epsilon\xi)^{3/4}}\sqrt{\frac{\pi}{\sqrt{2}}}
  \cos\left(\epsilon\xi - (\epsilon\xi)^{1/2} -
    \tfrac{3}{8}\pi\right)\\
    & \quad \times \exp\left(-(\epsilon\xi)^{1/2} -
    \tfrac{1}{4}\right)\Bigl
    (1 + O\left((\epsilon\xi)^{-1/2}\right)\Bigr), \quad
  \epsilon \xi \to \infty.
   \end{split}
\end{equation}
For this approximation to be meaningful $\xi > 1/\epsilon$.
Therefore, we cannot choose $\epsilon <C/S_N$, otherwise the bound on
the decay rate of the integral~\eqref{eq:int_both} would not be
adequate. 

It remains to establish if $\epsilon = O\left(S_N^{-1}\right)$ leads
to a good enough approximation to $e(N)$.  In order not to loose
information on the behaviour of $\Delta(x)$, the smoothing parameter
needs to be comparable with the rate of oscillation of $\Delta(x)$. In
other words, we need a bound on $\abs{\Delta'(x)}$.  Such a bound can
be obtained, once again, using the Berry-Ess\'een
inequality~\eqref{eq:bound_1}:
\begin{equation}
  \label{eq:bound_der}
  \begin{split}
  \abs{\Delta'(x)}& =\abs{f_N(x) - \phi(x)}\le
  \int_{-S_N}^{S_N}\abs{\psi_N(\xi) -\psi(\xi)}d\xi\\
  & \quad  + 2\int_{S_N}^\infty\abs{\psi_N(\xi)}d\xi + 2\int_{S_N}^\infty
  \abs{\psi(\xi)}d\xi. 
 \end{split}  
\end{equation}
By comparing this inequality with~\eqref{eq:split}, we see that the
first and third integral are $O\left(N^{-(2-b)}\right)$; the second
  integral might possibly be bigger by a factor $\pi S_N$.   It follows
  that
  \begin{equation}
    \label{eq:bound_deriv}
    \abs{\Delta ' (x)}\le C S_N e(N).
  \end{equation}
  This is sufficient for our purposes.  The following lemma completes
  the proof of theorem~\ref{th:main_theorem}.
  
\begin{lemma}
  \label{le:diff_Deltas}
  Suppose that $\abs{\Delta'(x)} \le e(N)/\eta_N$. Then, for $\epsilon
  = \eta_N$ there exists a positive constant $C =O(1)$ such that $e(N)
  \le C e^*(N)$.
\end{lemma}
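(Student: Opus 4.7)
The plan is straightforward: show that convolution against the narrow test function $\chi_\epsilon$ cannot shrink the sup-norm of $\Delta$ by more than a constant factor, provided $\epsilon$ is chosen proportional to the oscillation scale of $\Delta$, namely to the reciprocal of the Lipschitz constant $e(N)/\eta_N$.

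First I would note that by lemma~\ref{le:absolutely_continuity} the function $F_N$ is absolutely continuous for $N>1$; combined with the smoothness of $\Phi$, this makes $\Delta$ absolutely continuous with $\Delta(\pm\infty)=0$. Hence the supremum $e(N)=|\Delta(x_0)|$ is attained at some finite $x_0$, and up to replacing $\Delta$ by $-\Delta$ we may assume $\Delta(x_0)=e(N)$.

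The heart of the argument is the elementary estimate
\begin{equation*}
\Delta(x_0 - t) \ge \Delta(x_0) - \frac{|t|}{\eta_N}\, e(N),
\end{equation*}
which follows from the fundamental theorem of calculus (valid by absolute continuity of $\Delta$) together with the hypothesis $|\Delta'|\le e(N)/\eta_N$. Integrating this against the non-negative, unit-mass kernel $\chi_\epsilon$ at $x_0$ yields
\begin{equation*}
\Delta^*(x_0) \ge e(N)\left(1 - \frac{1}{\eta_N}\int_{-\epsilon}^{\epsilon}|t|\chi_\epsilon(t)\,dt\right).
\end{equation*}
Setting $\epsilon=\eta_N$ and rescaling $t=\eta_N u$ using the explicit form~\eqref{eq:chi_def}, the last integral equals $c_0\,\eta_N$, where
\begin{equation*}
c_0 := \frac{1}{g}\int_{-1}^{1}|u|\exp\left(-\frac{1}{1-u^2}\right)du.
\end{equation*}
Because $|u|<1$ on the support of $\chi_1$ (and this support has positive measure while $\chi_1$ integrates to one), $c_0<1$ strictly; thus $e^*(N)\ge \Delta^*(x_0)\ge (1-c_0)\,e(N)$, and the lemma follows with $C=(1-c_0)^{-1}=O(1)$.

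There is no serious obstacle here. The one point to watch is that the pointwise Lipschitz estimate for $\Delta$ really does follow from an a.e.\ bound on $\Delta'$; this is secured by the absolute continuity of $F_N$ established in lemma~\ref{le:absolutely_continuity}, so that $\Delta(x_0)-\Delta(x_0-t)=\int_0^t\Delta'(x_0-s)\,ds$ holds everywhere. The rest is a routine convolution calculation.
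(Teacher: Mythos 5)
Your proof is correct, and it takes a genuinely different (and simpler) route than the paper. The paper shifts the evaluation point to $x=t_0+r\eta_N$ with $0<r<1$, uses a one-sided Lipschitz estimate on $[t_0,t_0+2r\eta_N]$ together with the trivial lower bound $\Delta\ge -e(N)$ on the rest of the support, invokes the evenness of $\chi_{\eta_N}$ to kill the linear term, and then optimizes numerically over $r$ (finding $h(2/3)\approx 0.78$, hence $C=5$). You instead evaluate $\Delta^*$ at the peak $x_0$ itself and use the two-sided Lipschitz bound over the whole support of $\chi_{\eta_N}$, which collapses the argument to a single line: $\Delta^*(x_0)\ge e(N)(1-c_0)$, where $c_0=g^{-1}\int_{-1}^{1}|u|\exp(-1/(1-u^2))\,du<1$ is the first absolute moment of $\chi_1$. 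This avoids the auxiliary parameter $r$ entirely and produces a sharper constant ($c_0\approx 0.33$, so $C\approx 1.5$ rather than $5$). The shift-and-split device in the paper resembles the classical Esseen smoothing argument, where it is unavoidable because the smoothing kernel there has unbounded support; since $\chi_\epsilon$ here is compactly supported, your direct evaluation at the peak works and is cleaner. One small point you handle more carefully than the paper: you note explicitly that the maximum of $|\Delta|$ may be attained at a negative value of $\Delta$ and dispose of this by replacing $\Delta$ with $-\Delta$, whereas the paper simply asserts $e(N)=\Delta(t_0)$.
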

\begin{proof}
  Since $\Delta(t)$ is continuous and $-1\le \Delta(t) \le 1$, there
  exists a $t_0 \in \mathbb{R}$ such that $e(N)=\Delta(t_0)$.  Then, we have
  \begin{equation}
    \label{eq:Del_bound_1}
    \Delta(t_0 + y) \ge e(N)\left(1 - \frac{y}{\eta_N}\right)
   , \quad \text{for
        $y >0$.} 
  \end{equation}

Now set
  \begin{equation}
    \label{eq:parameter}
    x  = t_0 + r\eta_N ,  \quad t = r\eta_N   - y,
  \end{equation}
  where $0 < r < 1$ is a parameter whose exact value is to be determined.
  Equation~\eqref{eq:Del_bound_1} becomes
    \begin{equation}
      \label{eq:Del_bound_2}
      \Delta\left(x - t \right) \ge e(N)\left(1 - r  +
        \frac{t}{\eta_N}\right), \quad 
      \text{for $\abs{t} \le r\eta_N$.} 
    \end{equation}
    Substituting this bound into the definition
  \begin{equation}
    \label{eq:del_star}
    \Delta^*(x) = \int_{-\infty}^{\infty}\Delta(x -t)\chi_{\eta_N}(t)dt
  \end{equation}
we arrive at
\begin{equation}
  \label{eq:inter}
e^*\left(N\right)\ge   \Delta^*(x) \ge e(N)\left(1 - r \right)  
   \int_{\abs{t} \le r \eta_N }\chi_{\eta_N}(t)dt  - e(N)\int_{\abs{t}
     \ge r \eta_N }\chi_{\eta_N}(t) dt,
\end{equation}
where we have used the inequality $\Delta(t) \ge -e(N) \ge -1$ and the
fact that the linear term in~\eqref{eq:Del_bound_2} does not
contribute because $\chi_{\eta_N}(t)$ is even.

Now, it turns out that the two integrals in~\eqref{eq:inter} are
independent of $\eta_N$.  Indeed,
\begin{equation}
  \label{eq:int_eq }
  \int_{\abs{t} \ge r \eta_N}
  \chi_{\eta_N}(t) dt = \frac{2}{g}\int_{r}^1
  \exp\left(-\frac{1}{1-t^2}\right)dt.
\end{equation}
Define
\begin{equation}
  \label{eq:hr}
  h(r) := r  + \frac{2}{g}\left(2 - r\right)\int_{r}^1
  \exp\left(-\frac{1}{1-t^2}\right)dt.
\end{equation}
Equation~\eqref{eq:inter} can now be rewritten
\begin{equation}
  \label{eq:new_hr}
  e^*(N) \ge e(N)\bigl(1-h(r)\bigr).
\end{equation}
Since $h(r)$ has a minimum
near $2/3$ and $h(2/3)=0.77646..\approx 4/5$, the statement of
the lemma follows with $C=5$.
\end{proof}

We achieve the rate of convergence in equation~\eqref{eq:error} if we
set 
\begin{equation}
    \label{def_eta} 
   \eta_N =  \frac{1}{C S_N} = \frac{1}{C\delta N^{\left(1-b\right)/2}}.
   \end{equation}

\section{Proof of theorem~\ref{th:total_variation}}
\label{se:tot_proof}

We need to prove equation~\eqref{eq:total_variation_2}. 

Let us choose the parameter $\eta_N$ in lemma~\ref{le:diff_Deltas} to
be 
\begin{equation}
  \label{eq:eta_eps}
  \eta_N = N^{-\zeta},
\end{equation}
where $0 < \zeta < \frac{1}{2}(1-b)$. Equation~\eqref{eq:bound_der} gives
\begin{equation}
  \label{eq:bound_der_3}
  \abs{f_N(x) - \phi(x)}\le C_\zeta N^{-2 + b + \zeta}
\end{equation}
for some constant $C_\zeta$.


Now take $\zeta^\prime >0$ and write
\begin{equation}
  \label{eq:fin_step}
  \begin{split}
    \int_{-\infty}^\infty \abs{f_N(x)-\phi(x)}dx & \le
    2\int_{0}^{N^{\zeta^\prime}}\abs{f_N(x)-\phi(x)}dx \\
    & \quad +
    2\int_{N^{\zeta^\prime}}^\infty f_N(x)dx + 2\int_{N^{\zeta^\prime}}^\infty
    \phi(x)dx \\
    & \le C_{\zeta, \zeta^\prime} N^{-2+b +\zeta +\zeta^\prime} 
    + 2\int_{N^{\zeta^\prime}}^\infty
    f_N(x)dx + 2\int_{N^{\zeta^\prime}}^\infty \phi(x)dx.
  \end{split}
\end{equation}
We have
\begin{equation}
  \label{eq:f_N}
  \begin{split}
  2\int_{N^{\zeta^\prime}}^\infty f_N(x)dx & =
  \int_{-\infty}^{-N^{\zeta^\prime}} f_N(x)dx + \int_{N^{\zeta^\prime}}^\infty
  f_N(x)dx \\
&     = F_N\left(-N^{\zeta^\prime}\right)  +
   1 - F_N\left(N^{\zeta^\prime}\right).
\end{split}
\end{equation}
Similarly
\begin{equation}
  \label{eq:phi_N}
  \begin{split}
  2\int_{N^{\zeta^\prime}}^\infty \phi(x)dx & =
  \int_{-\infty}^{-N^{\zeta^\prime}} \phi(x)dx + \int_{N^{\zeta^\prime}}^\infty
  \phi(x)dx \\
     & = \Phi\left(-N^{\zeta^\prime}\right)  +
   1 - \Phi\left(N^{\zeta^\prime}\right).
\end{split}
\end{equation}
Therefore, rearranging the terms and using the identity
$1-\Phi(x)=\Phi(-x)$ we obtain
\begin{equation}
  \label{eq:sum_int}
   \begin{split}
  2\int_{N^{\zeta^\prime}}^\infty
    f_N(x)dx + 2\int_{N^{\zeta^\prime}}^\infty \phi(x)dx & =
    \biggl(F_N\left(-N^{\zeta^\prime}\right) -
      \Phi\left(-N^{\zeta^\prime}\right)\biggr) \\ 
      & \quad -
     \biggl(F_N\left(N^{\zeta^\prime}\right) -
      \Phi\left(N^{\zeta^\prime}\right)\biggr)  
       + 4\biggl(1 - \Phi\left(N^{\zeta^\prime}\right)\biggr).
   \end{split}
\end{equation}
Finally, equation~\eqref{eq:total_variation_2} follows
from~\eqref{eq:error} and~\eqref{eq:ineq_er_fun}  by setting
$\epsilon =\zeta + \zeta^\prime$.


\section*{Acknowledgements}
We would like to express our gratitude to Paul Bourgade, Oliver
Johnson and Roman Schubert and for helpful discussions.  While this
research was carried out, F. Mezzadri was partially supported by EPSRC
grant no.  EP/G019843/1 and by a Leverhulme Research Fellowship.

\appendix

\section*{Appendix. The leading order asymptotics of
  $\hat{\chi}_\epsilon(\xi)$}
\renewcommand{\theequation}{A.\arabic{equation}}

The purpose of this appendix is to compute an explicit formula for the
leading order asymptotics of
\begin{equation}
  \label{eq:def_chi_a}
  \hat{\chi}_\epsilon\left(\xi\right) := 
   \int_{-\infty}^\infty e^{i\xi x}\chi_\epsilon (x)dx 
    = \frac{1}{g\epsilon} 
    \int_{-\epsilon}^\epsilon \exp\left(i\xi x -
  \frac{1}{1-\left(\frac{x}{\epsilon}\right)^2}\right)dx
\end{equation}
in the limit $\xi \to \infty$. Since $\hat{\chi}_\epsilon(\xi)=
\hat{\chi}_1\left(\epsilon\xi\right)$, for the sake of simplicity we
set $\epsilon=1$ and write $\hat{\chi} = \hat{\chi}_1$.  More
explicitly, we study the integral
\begin{equation}
  \label{eq:steep_desc_int}
  \hat{\chi}\left(\xi\right) =\frac{2}{g}\rpart \int_{0}^1 \exp\left(i\xi x -
  \frac{1}{1-x^2}\right)dx.
\end{equation}

\begin{propositiona}
We have
\begin{equation}
  \label{eq:leading_asymptotics_e1}
  \begin{split}
  \hat{\chi}\left(\xi\right) & =
  \frac{2}{g\xi^{3/4}}\sqrt{\frac{\pi}{\sqrt{2}}}
   \cos\left(\xi - \xi^{1/2} -
    \tfrac{3}{8}\pi\right)\\
    & \quad \times \exp\left(-\xi^{1/2} -
    \tfrac{1}{4}\right)\Bigl(1 + O\left(\xi^{-1/2}\right)\Bigr), \quad
  \xi \to \infty.
\end{split}
\end{equation}
\end{propositiona}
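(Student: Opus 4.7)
The plan is to apply the method of steepest descent to
\begin{equation*}
\hat{\chi}(\xi) = \frac{2}{g}\rpart \int_0^1 e^{f(x)}\,dx, \qquad f(x):= i\xi x - \frac{1}{1-x^2}.
\end{equation*}
A preliminary observation motivates the whole strategy: because $\chi_1$ is even in $x$ and the density $e^{-1/(1-x^2)}$ vanishes to infinite order at the endpoint $x=1$, successive integrations by parts at $x=0$ contribute only terms of the form $c_k/(i\xi)^{2k+1}$ with $c_k \in \mathbb{R}$, each of which is purely imaginary and hence killed by $\rpart$. Consequently every algebraic power of $1/\xi$ drops out, and the leading real asymptotics must come from a complex saddle of $f$.

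First I would locate the saddle by writing $x=1-\delta$ and solving $f'(x)=i\xi - 2x/(1-x^2)^2=0$ at leading order, which gives $\delta^2 = 1/(2i\xi)$ and therefore
\begin{equation*}
\delta = \frac{e^{-i\pi/4}}{\sqrt{2\xi}}, \qquad x_s = 1- \delta,
\end{equation*}
placing the saddle just above $x=1$ in the upper half-plane. Expanding $1/(1-x^2) = 1/(2\delta) + 1/4 + O(\delta)$ and using the saddle identity $1/(2\delta)=i\xi\delta$ gives $i\xi\delta + 1/(2\delta) = 2i\xi\delta = (1+i)\sqrt{\xi}$, hence
\begin{equation*}
f(x_s) = i\xi - (1+i)\sqrt{\xi} - \tfrac{1}{4} + O(\xi^{-1/2}),
\end{equation*}
which already accounts for the factor $\exp(-\sqrt{\xi}-\tfrac14)$ and for the phase $\xi - \sqrt{\xi}$ appearing in the claim. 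Differentiating once more, $f''(x) = -(2+6x^2)/(1-x^2)^3$, so at the saddle $f''(x_s) = -1/\delta^3\bigl(1+O(\delta^2)\bigr) = -2\sqrt{2}\,\xi^{3/2}e^{i3\pi/4}\bigl(1+o(1)\bigr)$.

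Next I would deform $[0,1]$ to a contour $\Gamma$ that ascends vertically from $x=0$ into the upper half-plane to height $h$ with $\sqrt{\xi}\ll h \ll \xi$, crosses $x_s$ along the steepest-descent direction $\arg(x-x_s)=-3\pi/8$, and then descends to $x=1$ through a region where $\rpart(-1/(1-x^2))\to -\infty$. Cauchy's theorem justifies the deformation since $e^{f}$ is holomorphic on $\mathbb{C}\setminus\{\pm 1\}$. The standard Gaussian approximation then yields
\begin{equation*}
\int_\Gamma e^{f(x)}\,dx = e^{f(x_s)}\sqrt{\frac{2\pi}{-f''(x_s)}}\bigl(1+O(\xi^{-1/2})\bigr) = \frac{\sqrt{2\pi}}{2^{3/4}\xi^{3/4}}\,e^{-\sqrt{\xi}-1/4}\,e^{i(\xi-\sqrt{\xi}-3\pi/8)}\bigl(1+O(\xi^{-1/2})\bigr),
\end{equation*}
and applying $\frac{2}{g}\rpart$ together with the identity $\sqrt{2\pi}/2^{3/4}=\sqrt{\pi/\sqrt{2}}$ produces the claimed formula.

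The main obstacle is purely technical: verifying that every part of $\Gamma$ outside a shrinking neighbourhood of $x_s$ contributes $o\bigl(e^{-\sqrt{\xi}}\xi^{-3/4}\bigr)$. The vertical ascent at $x=0$ carries an oscillatory factor $e^{-\xi h}$, which for $h\gg\sqrt{\xi}$ beats the saddle amplitude by any prescribed margin; the tails of the steepest-descent path past $x_s$ are dispatched by a Watson-lemma estimate on $\int e^{-\lambda t^2/2}\,dt$; and the descent to $x=1$ is tamed by the essential singularity there. These bounds are routine but tedious, and the entire analytic content of the proof is contained in the explicit evaluations of $f(x_s)$ and $f''(x_s)$ above.
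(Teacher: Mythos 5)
Your overall strategy — steepest descent through the complex saddle near $x=1$ — is exactly the paper's, and the analytic core of your proposal agrees with the paper's calculation in every detail: you locate the saddle at $x_s = 1 - e^{-i\pi/4}/\sqrt{2\xi}$ (the paper writes this as $1 + e^{3\pi i/4}/(\sqrt{2}\,\xi^{1/2})$, which is the same point), you obtain the same $f(x_s)=i(\xi-\sqrt{\xi})-\sqrt{\xi}-\tfrac14 + O(\xi^{-1/2})$ and $f''(x_s)=2\sqrt{2}\,e^{-i\pi/4}\xi^{3/2}(1+o(1))$, the descent direction $e^{-3\pi i/8}$, and the Gaussian factor $\sqrt{2\pi/\lvert f''\rvert}=\sqrt{\pi/\sqrt{2}}\,\xi^{-3/4}$. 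Your preliminary remark that the algebraic boundary contributions at $x=0$ are purely imaginary and therefore killed by $\rpart$ is also correct and is a nice motivation that the paper omits.

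There is, however, a genuine error in your justification of the contour bounds, and it is worth flagging because it concerns the part you dismiss as ``routine but tedious.'' Along the vertical segment $x=iy$, $0\le y\le h$, one has $f(iy) = -\xi y - 1/(1+y^2)$, which is \emph{real}, so the contribution of this segment to $\int_\Gamma e^{f}\,dx$ is
\[
i\int_0^h e^{-\xi y - 1/(1+y^2)}\,dy,
\]
a purely imaginary quantity of size $O(1/\xi)$. This is \emph{not} of size $e^{-\xi h}$, and it is certainly not smaller than the saddle amplitude $e^{-\sqrt{\xi}}$; the reason it is harmless is that it is annihilated by $\rpart$, not that it is small. (This is in fact precisely your own opening remark about boundary terms at $x=0$ in a different guise — you should have connected the two.) Relatedly, the constraint $\sqrt{\xi}\ll h\ll\xi$ on the height of the ascent cannot be what is wanted: the saddle sits at height $\rpart(-i\,x_s)\sim 1/(2\sqrt{\xi})$, so a contour forced up to height $h\to\infty$ must come all the way back down, and it is on that long descent that a non-trivial estimate is required. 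The paper sidesteps these issues by truncating at $x=1-\delta$ with $\delta<\xi^{-\beta}$, $\beta>3/4$, deforming $[0,1-\delta]$ directly onto a curve asymptotic to the steepest-descent path through $\bar{x}(\xi)$, and bounding the discarded tail by $2\delta e^{-1/(2\delta)}/g$. Your saddle-point computation is sound; the contour estimates need to be redone with the correct reason (imaginariness, not smallness) for discarding the vertical piece, and with a sensible choice of $h$.
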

\begin{proof}
  \eqref{eq:steep_desc_int} can be estimated using the method
  of steepest descents.  The integrand is not analytic at one, so we
  look at
\begin{equation}
  \label{eq:int_delta}
  \frac{2}{g}\rpart \int_{0}^{1-\delta} \exp\left(i\xi x -
  \frac{1}{1-x^2}\right)dx,
\end{equation}
where $\delta > 0$ is small. The difference
between~\eqref{eq:steep_desc_int} and~\eqref{eq:int_delta} is bounded by
$2\delta e^{-1/(2\delta)}/g$.

Since we are interested only in the real part of~\eqref{eq:int_delta},
for large $\xi$ the origin will not contribute to leading order; the
main contribution should come from a small neighbourhood near one.

Consider the argument of the exponential:
\begin{equation}
  \label{eq:argument}
    f(x) := i\xi x - \frac{1}{1-x^2}.
\end{equation}
Its saddle points are the solutions of the equation
\begin{equation}
  \label{eq:sol_pol}
  i\xi\left(1- x^2\right)^2  - 2x = 0.
\end{equation}
For large $\xi$ the roots of this polynomial can be computed
perturbatively in the parameter $1/\xi$.  In other words, we look for
a solution near one with an asymptotic expansion of the form
\begin{equation}
  \label{eq:power_exp}
  \bar{x}(\xi) =  1 + \frac{x_1}{\xi^\alpha} + \frac{x_2}{\xi^{2\alpha}}
  + O\left(\xi^{-3\alpha}\right), \quad \xi \to \infty,
\end{equation}
where $\alpha$ is a rational power. By substituting this expression
into~\eqref{eq:sol_pol}, one finds that the two sides of the equation
can be balanced only if $\alpha=1/2$ and that the first two
coefficients are 
\begin{equation}
  \label{eq:next_to_leading}
  x_1 = \pm \frac{e^{\frac{3}{4}\pi i}}{\sqrt{2}} \quad \text{and}
  \quad x_2 =0.
\end{equation}

\begin{figure}[h]
\centering
\includegraphics[width=5.5in]{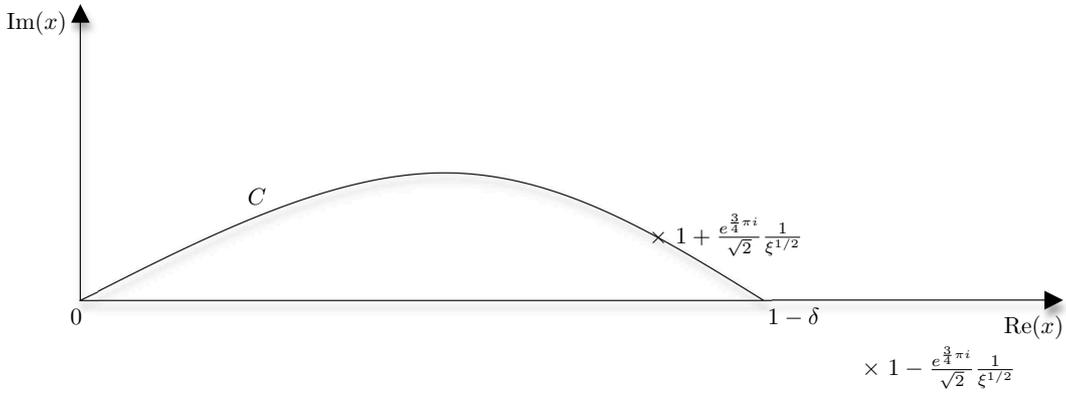}
\caption{The deformation of the interval $[0,1-\delta]$
and the saddle points of $f(x)$.} 
\label{fi:cont_def}
\end{figure}
Provided $\delta$ is sufficiently small, the interval of integration
of~\eqref{eq:int_delta} can be deformed into a contour asymptotically
equivalent to the steepest descent path passing through
\begin{equation}
  \label{eq:axxept_sol}
  \bar{x}(\xi) = 1 + \frac{e^{\frac{3}{4}\pi i}}{\sqrt{2}}\frac{1}{\xi^{1/2}} +
  O\left(\xi^{-3/2}\right), \quad \xi \to \infty.
\end{equation}
(See figure~\ref{fi:cont_def}.)
Such a deformation is not possible for the critical point
\begin{equation}
  \label{eq:axxept_sol_2}
  1 - \frac{e^{\frac{3}{4}\pi i}}{\sqrt{2}}\frac{1}{\xi^{1/2}} +
  O\left(\xi^{-3/2}\right), \quad \xi \to \infty.
\end{equation}
Trivial algebra gives
\begin{subequations}
  \label{eq:f_fpp}
\begin{align}
  f\bigl(\bar{x}(\xi)\bigr) & =i\left(\xi - \xi^{1/2}\right) 
  - \xi^{1/2} - \frac14 +
  O\left(\xi^{-1/2}\right), \quad \xi \to \infty, \\ 
  f^{\prime \prime}\bigl(\bar{x}(\xi)\bigr) & 
   = 2\sqrt{2}e^{-i\frac{\pi}{4}}\xi^{3/2} + 
  O\left(\xi^{1/2}\right), \quad \xi \to \infty. 
\end{align}
\end{subequations}
The tangent to the steepest descent path at $\bar x$ has equation
\begin{equation}
  \label{eq:tangent}
  x(t) = \bar{x}(\xi) + t e^{-\frac{3}{8}\pi i}.
\end{equation}
Therefore, we have
\begin{equation}
  \label{eq:lead_as3}
  \begin{split}
   \int_{0}^{1-\delta} \exp\left(i\xi x -
  \frac{1}{1-x^2}\right)dx & =
  \int_{C} \exp\left(i\xi x -
  \frac{1}{1-x^2}\right)dx \\
     &\sim e^{-\frac{3}{8}\pi i} 
         \sqrt{\frac{2\pi}{\abs{f^{\prime
        \prime}\bigl(\bar{x}(\xi)\bigr)}}} \exp\biggl(
     f\bigl(\bar{x}(\xi)\bigr)\biggr), \quad \xi \to \infty.
  \end{split}
\end{equation}
Finally, by inserting equations~\eqref{eq:f_fpp}
into~\eqref{eq:lead_as3} we arrive at~\eqref{eq:leading_asymptotics_e1},
provided $\delta < \xi^{-\beta}$ and $\beta > 3/4$.

\end{proof}

\vspace{.25cm}

\noindent\rule{16.2cm}{.5pt}

\vspace{.25cm}

{\small \noindent {\sl Department of Mathematics \\
                       University of Bristol\\
                       Bristol BS8 1TW, UK  \\
                       Email: {\tt j.p.keating@bristol.ac.uk}\\
                       Email: {\tt f.mezzadri@bristol.ac.uk}\\
                       Email: {\tt b.singphu@bristol.ac.uk}

                       \vspace{.25cm}

                       \noindent 16 November 2010}

\end{document}